%% file: FC2027/main-FC-short.tex
\documentclass[runningheads]{llncs}
\usepackage[T1]{fontenc}
\usepackage{graphicx} % Required for inserting images
\usepackage{url,amsmath,amssymb,color}
\usepackage{float}
\usepackage{soul}
\usepackage[colorinlistoftodos]{todonotes}
\usepackage{tikz}
\usepackage{pgfplots}
\pgfplotsset{compat=1.18}
\usepackage{booktabs}
\usepackage{tcolorbox}
\usepackage[hidelinks]{hyperref}

\newtheorem{assumption}{Assumption}

\usepackage{graphicx}
\begin{document}
\title{Rewarding Informed Voting in PoS Governance\thanks{We would like to thank Paolo Penna, members of the Cardano Foundation and anonymous reviewers for helpful comments.}}%tentative title
\titlerunning{Rewarding Informed Voting in PoS Governance}
% If the paper title is too long for the running head, you can set
% an abbreviated paper title here
%
\author{Juan Beccuti \and
Manvir Schneider}
\authorrunning{J. Beccuti and M. Schneider}
% First names are abbreviated in the running head.
% If there are more than two authors, 'et al.' is used.
%
 \institute{Cardano Foundation, Switzerland \\
 \email{\{firstname.lastname\}@cardanofoundation.org}}
% Springer Heidelberg, Europaplatz 3, 69115 Heidelberg, Germany
% \email{lncs@springer.com}\\
% \url{http://www.springer.com/gp/computer-science/lncs} \and
% ABC Institute, Ruprecht-Karls-University Heidelberg, Heidelberg, Germany\\
% \email{\{abc,lncs\}@uni-heidelberg.de}}
%
\maketitle              % typeset the header of the contribution
\begin{abstract}
We study reward design for delegated representatives who acquire costly information and vote on a binary governance proposal. Effort improves both private information and delegated voting weight. We show that no payment rule can make sincere voting strictly optimal at every positive effort profile under a non-neutral prior, while fixed budgets limit accuracy in large symmetric committees. We then compare proportional sharing with a conditional rule that rewards only members of a sufficiently large winning coalition. Unlike proportional sharing, which supports sincere voting only through indifference, the conditional rule makes sincere voting strictly optimal at symmetric positive effort profiles. It supports sincere equilibria with the same effort as proportional sharing and, under suitable reward-cost conditions, with strictly higher effort and decision accuracy. Insincere equilibria may nevertheless coexist.

\keywords{Proof-of-Stake governance \and Delegated representatives \and Information acquisition \and Strategic voting \and Reward design \and Voting incentives}
\end{abstract}
\input{sections/intro-short}
\input{sections/literature-short}
\input{sections/model-short}

\input{sections/psr-short}

\input{sections/cpsr-short}
\input{sections/conclusion-short}

\bibliographystyle{splncs04}
\bibliography{BiblioGov}

\setcounter{section}{0}
\renewcommand{\thesection}{\Alph{section}}

\input{sections/appendix-proof-short}

\input{sections/appendix-results-examples-short}
\input{sections/appendix-implementation}

\end{document}

%% file: sections/intro-short.tex
\section{Introduction}\label{sec: intro}

Collective governance faces a familiar participation problem: voting and
evaluating proposals are costly, while an individual stakeholder may
have little chance of influencing the collective decision. Even
stakeholders who care about the outcome may therefore have weak
incentives to participate or become informed. In blockchain governance,
this difficulty is compounded by the technical and economic judgment
required to assess proposals concerning protocol changes and community
resources. Widely distributed voting rights do not, by themselves,
ensure informed participation. Indeed, empirical evidence from major
decentralized autonomous organizations documents limited voting
participation and concentrated decision-making power
\cite{falk2024blockchain}.

Delegation offers one response by allowing stakeholders to entrust their
voting power to representatives \cite{bongaerts2025vote}. Cardano provides
a concrete example: ada holders can delegate their voting power to
delegated representatives (DReps), who vote on governance actions
\cite{Decentrilized_Gov_CIP-1694}. Delegation reduces the need for every stakeholder to
evaluate every proposal, but sustaining representatives' work still
requires well-designed incentives. Evaluating proposals and communicating
assessments to delegators impose individual costs, while compensation
may come from a limited reward budget shared across the committee.
How that budget is distributed therefore shapes representatives'
incentives to become informed and vote accordingly.

Motivated by this setting, we study a committee of DReps deciding
whether to accept or reject a proposal of uncertain quality. Each DRep
chooses costly effort, which improves her private information and
attracts greater delegation support, increasing her voting weight.
After observing the distribution of voting power and updating her
belief using her private signal, she votes simultaneously with the
other DReps. The collective decision is determined by weighted majority
voting. DReps care only about their payments and effort costs, without
an intrinsic preference over the outcome. The designer therefore seeks
a reward rule that encourages both information acquisition and sincere
voting, meaning voting for the alternative each DRep believes is more
likely to be correct.

We first establish two general limitations.

\begin{tcolorbox}[
  colback=white,
  colframe=black,
  boxrule=0.4pt,
  arc=0pt
]
\textbf{Result 1. General limits.}

When the prior favors one state, no payment rule can satisfy strict
sincere-voting incentive compatibility (strict SVIC, Def.~\ref{pr:requirements}) at every positive
effort profile.
% Moreover, under a fixed reward budget and participation constraints, no positive accuracy advantage over choosing the prior-favored alternative can remain bounded away from zero as the committee grows along symmetric equilibria.
With a fixed reward budget and linear effort costs, any positive
accuracy advantage over choosing the prior-favored alternative
vanishes as committee size grows along symmetric equilibria
satisfying participation constraints.

\smallskip
\textit{Theorems~\ref{pr:strict-impossibility}
and~\ref{pr:accuracy}.}
\end{tcolorbox}

The first limitation leaves open whether strict SVIC can hold at
equilibrium. The second follows from the budget's restriction on
aggregate effort and bounds the accuracy attainable in large symmetric
committees.

We then compare the proportional sharing rule (PSR), which rewards
voting power independently of votes, with the conditional proportional
sharing rule (CPSR), which distributes rewards proportionally within
the winning coalition only if it contains a majority of DReps.

\begin{tcolorbox}[
  colback=white,
  colframe=black,
  boxrule=0.4pt,
  arc=0pt
]
\textbf{Result 2. Strict sincerity at the voting stage.}

PSR supports sincere voting only through indifference, whereas CPSR
satisfies strict SVIC at every symmetric positive effort profile.
For every interior prior, CPSR also supports a full symmetric sincere
equilibrium with the same effort and decision accuracy as sincere PSR.

\smallskip
\textit{Propositions~\ref{pr:PSR-properties}
and~\ref{pr:CPSR-properties};
Theorem~\ref{thm:cpsr-benchmark-sincere}.}
\end{tcolorbox}

CPSR makes a representative's vote relevant to payment eligibility.
This strengthens voting incentives while preserving the PSR effort
benchmark. We next ask whether CPSR can also support higher effort
and accuracy.

\begin{tcolorbox}[
  colback=white,
  colframe=black,
  boxrule=0.4pt,
  arc=0pt
]
\textbf{Result 3. Higher effort and accuracy.}

For every committee size and interior prior, CPSR supports symmetric
sincere equilibria with strictly higher effort and success probability
than sincere PSR when the reward-cost ratio is sufficiently close to,
but below, the level at which PSR induces maximal effort.

Once PSR induces maximal effort, CPSR supports the same effort and
perfect accuracy. These equilibria satisfy strict SVIC.

\smallskip
\textit{Propositions~\ref{prop:cpsr-interior-existence}
and~\ref{prop:maximal-effort};
Corollary~\ref{cor:cpsr-higher-success}.}
\end{tcolorbox}

The effort comparison reflects competing incentives: better information
can increase the probability of receiving a reward, while sharing
within the paid coalition provides a smaller marginal incentive to
increase one's payment share. The stated reward-cost conditions allow
the first effect to dominate. These are equilibrium existence results:
insincere pooling equilibria may coexist, so the improvements do not
establish uniqueness or guarantee equilibrium selection.

\paragraph{Roadmap.}
Section~\ref{section:model} introduces the model, incentive properties,
and equilibrium concept, and establishes general limitations on strict
sincere-voting incentives and decision accuracy.
Section~\ref{section:psr} characterizes the PSR benchmark.
Section~\ref{section:cpsr} analyzes CPSR's voting incentives and
equilibria, including conditions supporting higher effort and accuracy
than PSR.
Section~\ref{section:conclusions} concludes.
Appendix~\ref{appendix:proofs} contains the proofs of the main results.
Appendix~\ref{appendix:results-examples} provides additional analytical
results, worked examples, and supporting probability and payoff
calculations.
Appendix~\ref{app:implementation} outlines a possible implementation
path for Cardano.

%% file: sections/literature-short.tex
\subsection{Related Literature}\label{sec: literature}

Our paper connects research on blockchain governance
\cite{kiayias2023sok} with the economic literature on
information acquisition and strategic voting, and with
mechanisms for eliciting information from compensated agents.

In the first branch, our work is closest to
\cite{birmpas2024reward}, who study effort incentives and
committee size when information acquisition attracts
delegated voting power. Their analysis shows that proportional
rewards under a fixed budget can lead to vanishing effort
and near-random decisions in large committees. Whereas their
model directly links effort to the probability of a correct
vote, we distinguish acquiring information from acting on it:
representatives choose how to vote after updating their
beliefs, making sincere voting an additional incentive
requirement.

\cite{tsoukalas2020token} also study costly information
acquisition and strategic voting under token-weighted
aggregation. Their participants benefit from accurate
estimates through token values; our outcome-neutral
representatives require compensation to become informed.
The distinction between stake shares and voting power is
examined in \cite{zhang2026swimming}, while
\cite{gersbach2022risky,gersbach2023impact,gersbach2024effects}
study how delegation affects participation, voting power,
and collective outcomes. We take the relation between effort
and delegated voting weight as given and examine how rewards
jointly incentivize information acquisition and sincere voting.

The joint provision of these incentives connects our paper
to \cite{caragiannis2022outsourcing}, who study costly effort
and strategic voting by outcome-neutral jurors with equal
voting weights. They use rewards linked to majority agreement
to support target accuracy at minimum expected cost. Our
distinction lies in endogenous voting weights: information
acquisition changes both a representative's influence and
her compensation, affecting incentives to vote sincerely.
\cite{dasgupta2013crowdsourced} instead use evaluations
across multiple tasks to induce effort and truthful reporting
of private observations. We study a single collective
decision, where votes determine both the outcome and
reward eligibility.

Related compensation questions arise in
\cite{azrieli2021monitoring}, which characterizes minimum-cost
contracts for outcome-neutral experts acquiring costly
information, with optimal payments only when reports agree.
\cite{azrieli2022delegated} extends this analysis to general
peer-monitoring environments. These studies connect
compensation to information acquisition and truthful
reporting; we examine these incentives within delegated
weighted-majority voting, where effort determines voting
weight and compensation is restricted by a fixed budget.

Another branch, in economics and social choice, examines
information aggregation by voters who value collective
outcomes. \cite{austensmith1996information} establish that
common interests do not generally make sincere voting an
equilibrium. With endogenous information,
\cite{persico2004committee} studies how committee size and
voting thresholds shape acquisition incentives, while
\cite{martinelli2006would} shows that vanishing individual
information can coexist with accurate collective decisions
when weak information is sufficiently cheap. In these
baseline environments, information is privately valuable
through its effect on the collective decision. In our
setting, a vote can affect compensation even without
changing that decision.\footnote{\cite{persico2004committee}
also considers sufficiently large transfers rewarding
agreement with other members' votes; our analysis restricts
compensation to a fixed reward budget.}
The cost and financing assumptions also matter for
aggregation: with linear effort costs and a fixed budget,
our bound implies that any positive accuracy advantage
over the prior-based decision vanishes as committee size
grows along symmetric equilibria.

\cite{gersbach2022appointed} and \cite{gersbach2024we}
introduce monetary incentives into committee design through
incentive transfers and sharing of verifiable information
costs, respectively, obtaining conditions favoring small
committees. Their mechanisms supplement members' interest
in correct collective decisions. Our analysis instead
concerns representatives without that underlying interest,
for whom remuneration provides the entire incentive to
acquire information.

Finally, \cite{gerardi2008information} and
\cite{gershkov2009optimal} examine how the decision procedure
itself can encourage information acquisition. The former
choose committee size and the rule translating votes into
decisions; the latter characterize optimal sequential
acquisition and voting procedures. These approaches can
sacrifice efficient use of acquired information to improve
acquisition incentives. Our design question concerns
compensation within given institutional constraints:
committee size and simultaneous weighted-majority voting
are taken as given.

%% file: sections/model-short.tex
\section{Model}\label{section:model}

We consider a vote on a proposal submitted for the DReps' consideration. Proposal quality is uncertain ex ante. Let the true state be $S\in\{G,B\}$ with common prior $\mu \equiv P(G)\in(0,1)$ that the proposal is of good quality and $P(B)=1-P(G)$ that it is of bad quality.

There are $N=\{1,\dots,2n+1\}$ ex-ante identical \textit{delegated representatives} (from now on, DReps) participating in a binary governance decision to either accept or reject a proposal, where abstentions are abstracted away for future work. Each DRep \(i\) simultaneously chooses an effort level \(x_i\in[0,\frac{1}{2}]\), which is costly and given by the linear cost function \(c(x_i)=ax_i\), where \(a>0\). Let
\[\mathbf{x}=(x_1,\ldots,x_{2n+1}) \in \left[0,\frac12\right]^{2n+1}\]
denote the effort profile.

As in \cite{birmpas2024reward}, higher effort not only improves information quality but also increases the voting weight attached to that DRep. Accordingly, voting power is endogenous and given by
\[
w_i(\mathbf{x})
:=
\begin{cases}
\dfrac{x_i}{\sum_{j=1}^{2n+1}x_j},
& \text{if } \sum_{j=1}^{2n+1}x_j>0,\\[3mm]
0,
& \text{if } \sum_{j=1}^{2n+1}x_j=0,
\end{cases}
\]
The interpretation is that
higher effort implies a better-informed DRep, which increases the support delegated to her vote. After all DReps have chosen effort, the resulting
voting-power profile
\[
\mathbf{w}(\mathbf{x})
:=
\bigl(w_1(\mathbf{x}),\ldots,w_{2n+1}(\mathbf{x})\bigr)
\]
becomes publicly observable.

For any DRep \(i\) with \(x_i>0\), however, observing
\(\boldsymbol{w}\) together with her own effort is sufficient to recover
the entire effort profile, since
\[
\sum_j x_j=\frac{x_i}{w_i},
\qquad
x_j=\frac{w_j}{w_i}x_i
\quad\text{for every }j.
\]
Hence, for positive effort profiles, observing the voting-power profile is informationally equivalent to observing the effort profile itself. For notational convenience, we  write voting strategies and continuation payoffs as functions of \(\mathbf{x}\). Throughout the remainder of the paper, references to the effort profile available at the voting stage should be understood as referring to the effort profile inferred from the publicly observed voting-power profile.

After all DReps have chosen their effort, each DRep \(i\) then privately observes a signal \(s_i\in\{g,b\}\). The signal has symmetric precision across states:
\[ p_i
:=
\Pr(g\mid G,x_i) = \Pr(b\mid B,x_i) = \frac12+x_i.
\]
Conditional on the state and the effort profile, private
signals are independent across DReps. Higher effort raises the likelihood that DRep \(i\)'s signal is aligned with the true state. Upon observing \(s_i\), DRep \(i\) updates her belief about the state by Bayes' rule
\begin{align*}
    \hat{\mu}_{i,g}&:=\Pr(G\mid g,x_i)=\frac{\mu p_i}
{\mu p_i+(1-\mu)(1-p_i)},\\
\hat{\mu}_{i,b}&:=\Pr(G\mid b,x_i)=\frac{\mu(1-p_i)}
{\mu(1-p_i)+(1-\mu)p_i}.
\end{align*}

After inferring the effort profile \(\mathbf{x}\), receiving her private signal, and forming posterior belief \(\hat{\mu}_i\), DRep \(i\) chooses a vote
\[
    v_i(\hat{\mu}_i,\mathbf{x})\in\{0,1\},
\]
where \(v_i=1\) denotes acceptance and \(v_i=0\) denotes rejection. Thus, voting depends both on the DRep's posterior belief and on the distribution of effort inferred from the observed voting-power profile. 

A pure strategy for DRep \(i\) consists of an effort choice
\(x_i\in[0,\frac12]\) and a voting rule
\(v_i:[0,1]\times\left[0,\frac12\right]^{2n+1}\rightarrow
\{0,1\}\). Let \(\mathbf{v}=(v_1,\ldots,v_{2n+1})\) denote the profile of voting rules.

\paragraph{Timing and information structure.}
The timing and information structure of the governance game can be summarized as
\[
\underbrace{\mu}_{\substack{\text{common}\\\text{prior}}}
\;\longrightarrow\;
\underbrace{(x_i)_{i\in N}}_{\substack{\text{simultaneous}\\\text{effort choices}}}
\;\longrightarrow\;
\underbrace{\mathbf{x}}_{\substack{\text{inferred from}\\\text{publ. observed}\\ \mathbf{w}}}
\;\longrightarrow\;
\underbrace{s_i}_{\substack{\text{private}\\\text{signal}}}
\;\longrightarrow\;
\underbrace{\hat{\mu}_i}_{\text{posterior}}
\;\longrightarrow\;
\underbrace{v_i(\hat{\mu}_i,\mathbf{x})}_{\text{vote}},
\]
where arrows denote chronological sequence.

\begin{definition}[Sincere Voting]
For any inferred effort profile \(\mathbf{x}\), DRep \(i\) votes sincerely
if her vote depends only on her posterior belief and supports the
posterior-more-likely state. That is,
\[
v_i^{sv}(\hat{\mu}_i,\mathbf{x})
\equiv
v_i^{sv}(\hat{\mu}_i)
=
\mathbf{1}
\left\{
\hat{\mu}_i>\frac12
\right\}.
\]
\end{definition}

Using Bayes' rule, sincere voting implies
\[
v_i^{sv}(\hat{\mu}_{i,g},\mathbf{x})
=
\mathbf{1}
\left\{
\mu>\frac12-x_i
\right\},
\qquad
v_i^{sv}(\hat{\mu}_{i,b},\mathbf{x})
=
\mathbf{1}
\left\{
\mu>\frac12+x_i
\right\}.
\]

It follows that three cases may arise\footnote{The boundary cases follow from sincere rejection at posterior
$1/2$: at $\mu=1/2-x_i$, the posterior after $g$ equals
$1/2$, while at $\mu=1/2+x_i$, the posterior after $b$
equals $1/2$.}
\[
\begin{cases}
\mu\leq \frac12-x_i,
& \text{DRep } i \text{ rejects after both signals},\\[0.4em]
\frac12-x_i<\mu\leq\frac12+x_i,
& \text{DRep } i \text{ accepts after } g
  \text{ and rejects after } b,\\[0.4em]
\mu>\frac12+x_i,
& \text{DRep } i \text{ accepts after both signals}.
\end{cases}
\]

Notice that these prior ranges are endogenous to DRep \(i\)'s effort. Under sincere voting, a DRep’s effort can induce a shift in her decision rule, potentially moving her from a regime where she votes independently of her signal (SI) to a range where her vote becomes signal-contingent (SC). With \(v_i(\hat{\mu}_i,\mathbf{x})\), there is an additional strategic channel: the vote may also respond directly to the effort choices of the other DReps through the observed distribution of voting power.

%---------------------------------------------------------------------------------
\subsection{Expected Payoffs}\label{section:expected_payoffs}

For any subset \(C\subseteq N\), define its total voting weight by
\[
\omega(C;\mathbf{x})
:=
\sum_{j\in C}w_j(\mathbf{x}).
\]

Let $A\subseteq N$ be the set of DReps who vote to accept.
The proposal is accepted when
$\omega(A;\mathbf{x})>1/2$ and rejected when
$\omega(A;\mathbf{x})<1/2$. At a voting-weight tie,
an independent fair coin determines the outcome after votes
have been cast. The winning coalition is therefore
\[
W(A;\mathbf{x})=
\begin{cases}
A,
& \omega(A;\mathbf{x})>1/2,\\
N\setminus A,
& \omega(A;\mathbf{x})<1/2,\\
A \text{ or } N\setminus A,\ \text{each with probability }1/2,
& \omega(A;\mathbf{x})=1/2.
\end{cases}
\]

Let $f_i(W,\mathbf{x})$ denote the transfer received by
DRep $i$ when the realized winning coalition is $W$ and
the effort profile is $\mathbf{x}$. The payment rule
determines which DReps receive a transfer and its amount.
Since the protocol cannot observe the true state,
payments depend only on the realized winning coalition
and the effort profile, not on whether the state is $G$
or $B$.

Moreover, DReps do not derive idiosyncratic private benefits from one policy outcome or the other:

\begin{assumption}
DReps are outcome-neutral: they have no intrinsic preference for
acceptance or rejection beyond the payment rule and effort costs.
\end{assumption}

This assumption isolates the incentive effects of the reward rule from other motives that may shape DReps' behavior.\footnote{A related
assumption appears explicitly in
 \cite{azrieli2021monitoring} and \cite{george2023strategic}, where agents receive no intrinsic
utility from the outcome and compensation is the entire incentive.} Outcome neutrality does not imply indifference between
voting actions: a DRep's vote can affect which coalition
wins and, depending on the reward rule, whether and how
much she is paid.

Given an effort profile $\mathbf{x}$ and continuation
voting rules $v$, DRep $i$'s expected net payoff is
\begin{equation}
\pi_i(v,\mathbf{x};f)
=
\mathbb{E}\!\left[
f_i(W(A;\mathbf{x}),\mathbf{x})
\right]-c(x_i),
\label{eq:expected_payoff}
\end{equation}
where the expectation accounts for the common prior,
private signals, and the tie-breaking coin.
Appendix~\ref{app:probabilities-payoffs} provides the
probability formulas and expanded payoff expressions.

\subsection{Designer’s Problem and Incentive Properties}

The designer seeks a reward scheme that maximizes
$\Pr(\text{success};v,\mathbf{x})$, the probability
of accepting in state $G$ or rejecting in state $B$.
Appendix~\ref{app:probabilities-payoffs} gives its
expression, including the tie-breaking coin.

There are some incentive properties or constraints that the designer must account for when studying $f()$. 

\begin{definition}[Incentive properties]
\label{pr:requirements}
We consider the following requirements.
\begin{enumerate}
\item \emph{Effort-stage rationality (ER).}
Given continuation voting strategies $v^*$, each DRep's effort satisfies
\[
x_i^*\in\arg\max_{x_i\in[0,1/2]}
\pi_i\!\left(
v^*(\cdot;(x_i,\mathbf x_{-i}^*)),
(x_i,\mathbf x_{-i}^*);f
\right).
\]

\item \emph{Budget constraint (BC).}
For every effort profile and realized winning coalition,
\[
\sum_{i\in N}f_i(W,\mathbf x)\leq B.
\]

\item \emph{Participation constraint (PC).}
At the proposed equilibrium,
\[
\pi_i(v^*,\mathbf x^*;f)\geq0
\qquad\text{for every }i\in N.
\]

\item \emph{Sincere-voting incentive compatibility (SVIC).}
At a specified positive effort profile $\mathbf x$, for every DRep
and every signal having positive probability,
\[
v_i^{sv}(\hat\mu_{i,s_i})\in
\arg\max_{v_i\in\{0,1\}}
\mathbb E\!\left[
f_i(W(A;\mathbf x),\mathbf x)
\,\middle|\,
\hat\mu_{i,s_i},\mathbf x,v_i,v_{-i}^{sv}
\right].
\]
SVIC on a set of effort profiles means that this condition holds
at every profile in that set.

\item \emph{Strict SVIC.} \label{xxx}
SVIC holds, and the sincere action is the unique best response
whenever $\hat\mu_{i,s_i}\neq1/2$.
\end{enumerate}
\end{definition}

\begin{definition}[Equilibrium]
\label{pr:equilibrium}
A profile $(\mathbf x^*,v^*)$ is an equilibrium if it satisfies ER
and its feasible continuation strategies prescribe a best response
at every voting information set, including following effort deviations.
It is \emph{sincere on the equilibrium path} if the continuation
at $\mathbf x^*$ is $v^{sv}$.
\end{definition}

Thus SVIC verifies voting optimality where sincere voting is prescribed.
A sincere equilibrium additionally requires ER and rational
continuations after every effort deviation. PC follows from ER
under nonnegative transfers, since zero effort guarantees a
nonnegative payoff.

\begin{theorem}[A limitation of strict SVIC]
\label{pr:strict-impossibility}
For every $\mu\neq1/2$, no payment rule in this model satisfies
strict SVIC at every positive effort profile.
More specifically, if a rule satisfies SVIC at every positive profile,
then whenever
\[
0<x_j<|\mu-1/2|\quad(j\neq i),
\qquad
|\mu-1/2|<x_i\leq1/2,
\]
DRep $i$ must receive the same expected transfer from either vote
when the other DReps vote sincerely.
\end{theorem}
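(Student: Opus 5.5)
The idea is that in the stated region, sincere voting asks DRep $i$ to take \emph{opposite} actions after her two signals. At the same time, her expected transfer from each vote does not depend on her signal. So SVIC can only hold if she is indifferent between the two votes, which contradicts strictness.

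\textbf{Setup.} Fix $\mu\neq1/2$ and set $d:=|\mu-1/2|$. Take a positive profile $\mathbf x$ with $0<x_j<d$ for $j\neq i$ and $d<x_i\le1/2$. Such profiles exist whenever $d<1/2$, which holds because $\mu\in(0,1)$. By the case classification following the definition of sincere voting, every $j\neq i$ has $\mu\le\frac12-x_j$ or $\mu>\frac12+x_j$, according to the sign of $\mu-\frac12$. Hence each $j\neq i$ votes the same way after both signals: all reject if $\mu<1/2$, and all accept if $\mu>1/2$.

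For DRep $i$, we have $\frac12-x_i<\mu\le\frac12+x_i$, with the right inequality strict since $x_i>d$. So she is signal-contingent: she accepts after $g$ and rejects after $b$. Her posteriors satisfy $\hat\mu_{i,g}>1/2$ and $\hat\mu_{i,b}<1/2$ strictly. Both signals have positive probability because $\mu\in(0,1)$ and $p_i\in(1/2,1]$.

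\textbf{Key step: the transfer does not depend on $i$'s signal.} Since $v_{-i}^{sv}$ is signal-independent at this profile, the accepting set is deterministic given $v_i$:
\[
A=A_{-i}\cup\{i\}\ \text{ if } v_i=1,\qquad A=A_{-i}\ \text{ if } v_i=0,
\]
where $A_{-i}$ is either $\emptyset$ or $N\setminus\{i\}$. The winning coalition $W(A;\mathbf x)$ is then determined up to the independent tie-breaking coin. Payments depend only on $W$ and $\mathbf x$, not on the state. So the conditional expectation $\mathbb E[f_i(W(A;\mathbf x),\mathbf x)\mid\hat\mu_{i,s_i},\mathbf x,v_i,v_{-i}^{sv}]$ equals a number $F_i(v_i;\mathbf x)$ that does not depend on $s_i$.

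This is the step to state carefully. It uses that the coin is independent of state and signals, and that the others' votes do not depend on their own signals at this profile. Without the latter, $i$'s signal would carry information about the others' signals through the state, and the expectation could vary with $s_i$.

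\textbf{Conclusion.} SVIC after $g$ requires $F_i(1)\ge F_i(0)$, and SVIC after $b$ requires $F_i(0)\ge F_i(1)$. Hence $F_i(1)=F_i(0)$, which is the second claim of the theorem. The first claim follows: at such a profile, DRep $i$'s posterior differs from $1/2$ after either signal, yet both votes are best responses. So the sincere action is not the unique best response, and strict SVIC fails at this positive profile. Any rule that satisfied strict SVIC at every positive profile would in particular satisfy SVIC there, giving a contradiction.
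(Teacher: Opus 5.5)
Your proposal is correct and follows essentially the same route as the paper. The others' sincere votes are signal-independent at such a profile, so each of DRep $i$'s votes yields a signal-independent expected transfer; SVIC after $g$ and after $b$ then forces the two transfers to be equal, which rules out strict SVIC because both posteriors differ from $1/2$.
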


The limitation arises when the other DReps' sincere votes are independent of their signals, while DRep $i$'s sincere vote changes with hers. Sincere voting can be optimal for DRep $i$ after both signals only if the two actions yield equal expected payments, ruling out strict SVIC at these profiles. This prevents strict SVIC from holding at \emph{every} positive effort profile, but
does not preclude it at an equilibrium profile, such as the symmetric CPSR equilibria studied below.

\begin{theorem}[Budget and accuracy at symmetric equilibria]
\label{pr:accuracy}
For any payment rule satisfying BC, every equilibrium satisfying
PC obeys
\[
a\sum_{i\in N}x_i^*\leq B.
\]
If $\mathbf x^*=(x^*,\ldots,x^*)$, then, for any equilibrium
voting strategies,
\begin{equation}
\label{pr:accuracy-bound}
\Pr(\text{success};v^*,\mathbf x^*)
\leq
\min\left\{
1,\;
\max\{\mu,1-\mu\}
+2\min\{\mu,1-\mu\}\frac{B}{a}
  \frac{\binom{2n}{n}}{4^n}
\right\}.
\end{equation}
This bound does not require sincere voting.
In particular, for fixed $B$, $a$, and $\mu$, no such sequence
of symmetric equilibria can maintain a success probability
bounded strictly above $\max\{\mu,1-\mu\}$ as $n\to\infty$.
\end{theorem}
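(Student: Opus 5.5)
The argument has two parts: a budget inequality obtained from PC and BC, and an accuracy bound obtained by comparing the success probability with the best use of the pooled signals.

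\emph{Budget inequality.} By PC, every DRep satisfies $\mathbb{E}[f_i(W,\mathbf x^*)]-a x_i^*\ge 0$. Summing over $i$ and exchanging sum and expectation gives
\[
a\sum_i x_i^*\le \mathbb{E}\Bigl[\sum_i f_i(W,\mathbf x^*)\Bigr]\le B,
\]
where the last step applies BC pointwise to each realized coalition $W$. At a symmetric profile this reads $(2n+1)x^*\le B/a$.

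\emph{Accuracy bound.} The collective decision is a (possibly randomized) function of the signal vector $s=(s_1,\dots,s_{2n+1})$: the voting rules depend on $\mathbf x^*$ and the private signals, and the tie-breaking coin is independent. Hence, for any $v^*$, the success probability is at most that of the Bayes-optimal decision based on $s$:
\[
\Pr(\text{success})\le \sum_s \max\{\mu P(s\mid G),(1-\mu)P(s\mid B)\}.
\]
Writing $\max\{\alpha,\beta\}=\max\{\alpha,\beta\}$ as a sum and assuming w.l.o.g.\ $\mu\ge 1/2$,
\[
\max\{\mu P_G,(1-\mu)P_B\}\le \mu P_G+(1-\mu)\,(P_B-P_G)^+ .
\]
Summing over $s$ yields
\[
\Pr(\text{success})\le \mu+(1-\mu)\,\mathrm{TV}_{\rm sum},\qquad \mathrm{TV}_{\rm sum}=\sum_s (P_B-P_G)^+=d_{TV}(P_G,P_B).
\]
This gives the factor $\min\{\mu,1-\mu\}$, and the factor $2$ in the statement suggests the authors use a looser step. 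I would aim to establish $\Pr\le \max\{\mu,1-\mu\}+\min\{\mu,1-\mu\}\,d_{TV}$ and then bound $d_{TV}\le 2\cdot(\ldots)$.

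\emph{Total-variation bound.} Under both states the signals are i.i.d.\ with precision $p=\tfrac12+x$, and the likelihood ratio depends only on $k$, the number of $g$ signals. The total variation distance is therefore that between $\mathrm{Bin}(m,p)$ and $\mathrm{Bin}(m,1-p)$ with $m=2n+1$. For the fixed likelihood-ratio threshold set $\{k\ge n+1\}$,
\[
d_{TV}=\Pr\nolimits_p(k\ge n+1)-\Pr\nolimits_{1-p}(k\ge n+1)=2\Pr\nolimits_p(k\ge n+1)-1 .
\]
Define $h(x)=\Pr_{1/2+x}(\mathrm{Bin}(2n+1)\ge n+1)$. A standard identity gives
\[
h'(x)=(2n+1)\binom{2n}{n}\bigl(\tfrac14-x^2\bigr)^n\le (2n+1)\binom{2n}{n}4^{-n}.
\]
Integrating from $0$ gives $d_{TV}=2(h(x)-\tfrac12)\le 2(2n+1)x\binom{2n}{n}4^{-n}$. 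Substituting $(2n+1)x\le B/a$ gives $d_{TV}\le 2\frac{B}{a}\binom{2n}{n}4^{-n}$, which matches the stated constant. The bound holds for any voting strategies, since it dominates all decision rules, and the minimum with $1$ is trivial.

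\emph{Limit.} Since $\binom{2n}{n}4^{-n}\sim 1/\sqrt{\pi n}\to 0$, the bound tends to $\max\{\mu,1-\mu\}$, which gives the final claim.

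The main obstacle is the total-variation step: one must verify that the Bayes-optimal test's excess over the prior decision equals $\min\{\mu,1-\mu\}$ times a quantity bounded by $d_{TV}$ of the binomials, and carefully justify the derivative identity for $h$.
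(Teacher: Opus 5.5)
Your proof is correct and follows the paper's own argument. Both use PC plus BC for the effort budget, and both reduce success to $\max\{\mu,1-\mu\}+\min\{\mu,1-\mu\}\sum_s(P_B-P_G)^+$; the paper bounds the given rule's rejection-probability gap by these positive parts directly, while you pass through the Bayes-optimal rule, which amounts to the same step. Both then use $\sum_s(P_B-P_G)^+=2h(x^*)-1$ and integrate the derivative identity, with $(2n+1)x^*\le B/a$. Your aside that the factor $2$ signals a looser step is mistaken: it arises exactly from $d_{TV}=2h(x^*)-1$, as your own computation shows.
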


The first part bounds the aggregate effort that the
budget can support. At a symmetric equilibrium, it
implies $x^*\leq B/[a(2n+1)]$. The accuracy bound shows
that, under this restriction, increasing committee size
cannot preserve an accuracy advantage over choosing
the prior-favored alternative. The restriction to
symmetric effort is essential: the theorem does not
rule out concentrating the budget and information
acquisition on a smaller set of DReps.

%% file: sections/psr-short.tex
\section{Proportional Sharing Rule (PSR)}
\label{section:psr}

As a benchmark, consider a rule that distributes the budget in
proportion to effort, %independently of the collective decision and membership in the winning coalition:
\begin{equation}
\label{eq:proportional_sharing_rule}
f_i^{\mathrm{psr}}(\mathbf{x})
=
\begin{cases}
\dfrac{x_i}{\sum_{j\in N}x_j}B,
& \text{if } \sum_{j\in N}x_j>0,\\[1em]
0,
& \text{otherwise}.
\end{cases}
\end{equation}
Under PSR, $f_i(W,\mathbf{x})=f_i^{\mathrm{psr}}(\mathbf{x})$
for every winning coalition $W$: payment depends only on
effort.

Once effort has been chosen, payment is independent of the
DRep's posterior belief and of every DRep's vote. Consequently,
\[
\arg\max_{v_i\in\{0,1\}}
\mathbb{E}\!\left[
f_i^{\mathrm{psr}}(\mathbf{x})
\mid
\hat{\mu}_i,\mathbf{x},v_i,v_{-i}
\right]
=
\{0,1\}.
\]
Both voting actions are therefore optimal at every voting
continuation. At the effort stage, each DRep trades the increase
in her share of the budget against the cost of additional effort.
These incentives yield the following benchmark.

\begin{proposition}[Voting incentives and symmetric equilibrium under PSR]
\label{pr:PSR-properties}
PSR satisfies BC and SVIC at every positive effort profile.
It does not satisfy strict SVIC at profiles where some
posterior differs from $1/2$.
For every prior, it admits a symmetric sincere equilibrium
satisfying ER and PC, with
\[
x^{psr}
=
\min\left\{
\frac12,\frac{2nB}{a(2n+1)^2}
\right\}.
\]
\end{proposition}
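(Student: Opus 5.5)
The proof proceeds in three parts: the budget and voting-stage properties, which are immediate from the form of PSR; the symmetric effort equilibrium; and participation. \textbf{BC} holds because for every $W$ and positive profile $\sum_i f_i^{\mathrm{psr}}(\mathbf x)=B\sum_i x_i/\sum_j x_j=B$, and for the zero profile the total is $0\le B$. \textbf{SVIC}: as noted just before the statement, $f_i^{\mathrm{psr}}$ does not depend on $W$, so the conditional expected transfer is the same for $v_i=0$ and $v_i=1$. The argmax is therefore $\{0,1\}$, which contains $v^{sv}_i$. The same fact gives the \textbf{failure of strict SVIC}. At any positive profile where some DRep $i$ has a positive-probability signal with $\hat\mu_{i,s_i}\neq1/2$, the insincere vote is also a best response. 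Hence the sincere action is not the unique one.

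For the equilibrium, the continuation strategy is sincere voting $v^{sv}$ after every effort profile, including after deviations. This is a best response at every voting information set, since any vote is optimal under PSR. (At the zero-effort profile all payments are $0$, so again any vote is optimal.) Given this continuation, DRep $i$'s payoff depends only on efforts:
\[
\pi_i=\frac{x_i}{x_i+X_{-i}}B-a x_i,\qquad X_{-i}=\sum_{j\ne i}x_j .
\]
Fix others at $x$, so that $X_{-i}=2nx>0$. This function is strictly concave in $x_i$ on $[0,1/2]$, with derivative $BX_{-i}/(x_i+X_{-i})^2-a$. At a symmetric interior point the first-order condition gives $B\cdot 2nx/((2n+1)x)^2=a$, i.e. $x=2nB/(a(2n+1)^2)$. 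If this value exceeds $1/2$, then at $x=1/2$ the derivative is $2nB/((2n+1)^2\cdot\frac12)-a>0$, so the corner $x_i=1/2$ is optimal. The two cases together give $x^{psr}=\min\{1/2,2nB/(a(2n+1)^2)\}$. Concavity makes the first-order or corner condition sufficient, which verifies ER.

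The main care point is the boundary $x_i=0$ and the degenerate profile. Because $X_{-i}>0$ whenever $x>0$, the payoff is continuous at $x_i=0$, with value $0$ there, so concavity covers the whole interval and no deviation to $0$ is profitable. We also need $x^{psr}>0$, which holds since $B,a>0$ and $n\ge1$. (If $n=0$, the formula gives $0$ and the single DRep's payoff equals $B$ for any $x_1>0$, so this case needs a separate remark or is excluded.) Finally, \textbf{PC} follows as in the remark after the equilibrium definition: transfers are nonnegative and zero effort is a feasible deviation yielding $0$, so ER implies $\pi_i\ge0$. Alternatively, one can check directly that $B/(2n+1)-ax^{psr}\ge B/(2n+1)-2nB/(2n+1)^2=B/(2n+1)^2>0$. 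The argument holds for every prior $\mu$, since neither payments nor effort incentives depend on $\mu$.
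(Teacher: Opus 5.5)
Your proof is correct and follows the paper's argument essentially step for step. Payments under PSR do not depend on votes, which gives BC, SVIC and the failure of strict SVIC; strict concavity of $Bx_i/(x_i+2nx)-ax_i$ with the first-order/corner check at the symmetric point gives ER; and the bound $B/(2n+1)-ax^{psr}\ge B/(2n+1)^2>0$ gives PC. Your remark that the result needs $n\ge1$ (so that $x^{psr}>0$) is a valid point that the paper leaves implicit.
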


Proposition~\ref{pr:PSR-properties} establishes that PSR
supports effort and sincere voting together, but only through
indifference at the voting stage: DReps receive the same
payment regardless of their votes. Non-sincere voting
continuations therefore support the same equilibrium effort.
PSR satisfies SVIC, but not strict SVIC.

Sincere PSR also provides a benchmark for decision accuracy.
At $\mu=1/2$, any accuracy improvement achieved by another
reward rule with the same budget and effort costs becomes
small compared with PSR's own advantage over random choice
as the committee grows, restricting attention to symmetric
equilibria. PSR's advantage nevertheless vanishes.
Corollary~\ref{cor:psr-asymptotic-accuracy} in
Appendix~\ref{app:additional-analytical-results}
states this result formally.

These findings leave two questions open: whether conditioning
payments on the voting outcome can induce strict SVIC, and
whether it can support higher effort and accuracy than
sincere PSR. The asymptotic comparison does not rule out
such improvements in finite committees and does not
establish the same conclusion for priors $\mu\neq1/2$.

%% file: sections/cpsr-short.tex
\section{Conditional Proportional Sharing Rule (CPSR)}
\label{section:cpsr}

CPSR conditions payment on membership in the winning coalition. In particular, DRep $i$ receives
\begin{equation}
\label{eq:conditional_payment}
f_i^{cpsr}(W,\mathbf{x})=
\begin{cases}
\displaystyle B\frac{x_i}{\sum_{j\in W}x_j},
&i\in W,\quad |W|\geq n+1,\quad \sum_{j\in W}x_j>0,\\[1mm]
0,&\text{otherwise}.
\end{cases}
\end{equation}
Notice that the protocol can implement these shares using observed voting weights,
since $w_i/\sum_{j\in W}w_j=x_i/\sum_{j\in W}x_j$.

The two payment requirements have different roles. Voting weight determines
which alternative wins; coalition size determines whether that winning
coalition is paid. A DRep can therefore determine the outcome through
sufficient voting weight without receiving a payment as a singleton.
For example, consider five DReps with weight vector $(0.6, 0.1, 0.1, 0.1, 0.1)$. DRep $1$ can make acceptance win by herself, but if the remaining four DReps vote to reject, the resulting winning coalition contains fewer than three members and consequently receives no payment. Without the
cardinality requirement, such a singleton winner would receive the entire budget.

This payment structure creates two margins through which effort
affects expected payment. The \emph{share margin} concerns the
proportion of the reward received when a DRep is paid. The
\emph{eligibility margin} concerns whether she receives a payment
at all. Under PSR, every positive-effort DRep receives a share
of the budget independently of voting. Under CPSR, effort also
improves private information and changes voting weight,
potentially affecting the probability of belonging to a
payment-eligible winning coalition. These additional incentives
depend on the voting continuation.

Appendix~\ref{app:probabilities-payoffs} derives the
expected payoff when one DRep chooses her effort and
voting rule while the others exert common effort and
vote sincerely. After effort deviations, this expression
applies only when the others continue to vote sincerely;
full equilibrium also requires verifying effort incentives
under sequentially rational voting continuations.

Example~\ref{ex:cardinality-requirement} (see Appendix \ref{app:examples}) illustrates how the cardinality requirement prevents a DRep from capturing the entire budget by acquiring enough voting weight to win alone.

\subsection{Analysis}

We first examine voting incentives and then establish effort equilibria. Unlike PSR, CPSR can make sincere voting uniquely optimal whenever the posterior differs from $1/2$, whether sincere votes follow signals or pool. Establishing a symmetric
equilibrium with strict SVIC additionally requires verifying ER under sequentially rational voting continuations.

The cardinality requirement makes pooling a credible
continuation: a unilateral dissenter receives no payment, even if her voting weight changes the outcome. The following proposition also establishes strict SVIC at every symmetric positive effort profile. These voting continuations provide
the basis for the equilibrium constructions below.

\begin{proposition}[Voting incentives and pooling equilibria under CPSR]
\label{pr:CPSR-properties}
Suppose $n\geq1$, $B>0$, and $a>0$.
\begin{enumerate}
\item CPSR satisfies BC at every effort profile.
\item After every effort profile, pooling on acceptance and pooling on
rejection are voting-stage equilibria, independently of the prior.
\item For every $\mu\in(0,1)$, CPSR satisfies strict SVIC at every
symmetric positive profile $(x,\ldots,x)$, $0<x\leq1/2$.
It need not satisfy SVIC at asymmetric positive profiles.
\item For every prior $\mu\in(0,1)$, CPSR admits two symmetric
pooling equilibria satisfying ER and PC with effort
\[
x^{psr}=\min\left\{\frac12,\frac{2nB}{a(2n+1)^2}\right\}:
\]
one pools on rejection and the other on acceptance. These equilibria
are sincere and satisfy strict SVIC when $\mu\leq1/2-x^{psr}$
and $\mu>1/2+x^{psr}$, respectively, under the convention that
sincere voting rejects at posterior $1/2$.
\end{enumerate}
\end{proposition}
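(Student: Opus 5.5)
We prove the four parts in order. Parts 1, 2 and 4 are largely direct. Part 3 requires a conditional pivotality comparison and is the main obstacle.

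\emph{Part 1 (BC).} For any realized $W$, either nobody is paid, or the paid DReps are exactly those in $W$ and their transfers are $B x_i/\sum_{j\in W}x_j$. These transfers sum to $B$, so $\sum_i f_i^{cpsr}(W,\mathbf x)\le B$ for every $W$ and $\mathbf x$.

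\emph{Part 2 (pooling).} Fix any effort profile and suppose all $j\neq i$ vote to accept. If $i$ also accepts, the winning coalition is $N$ when $\sum_j x_j>0$. It has cardinality $2n+1\ge n+1$, so $i$ receives $Bx_i/\sum_j x_j\ge 0$. If $i$ instead rejects, one of two things happens. Either $A=N\setminus\{i\}$ wins, and $i\notin W$; or $\{i\}$ wins (or wins the coin), which has cardinality $1<n+1$ because $n\ge1$. In both cases $i$ receives $0$. So accepting is a weak best response after every signal, and symmetrically for rejection. The prior plays no role. Zero-effort profiles are covered trivially, since all payments are then zero.

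\emph{Part 3 (strict SVIC at symmetric profiles).} Fix $(x,\ldots,x)$ with $0<x\le 1/2$. All weights equal $1/(2n+1)$, so no ties occur and the winner is the simple majority, which automatically has at least $n+1$ members. Hence $i$ is paid exactly $B/|W|$ when her vote agrees with the majority, and the value $B/|W|$ depends on how many others agree with her.

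There are two subcases, following the three regimes of sincere voting.

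\emph{(a) The others pool under sincere voting.} This happens when $\mu\le 1/2-x$ or $\mu>1/2+x$. By Part 2, agreeing with the pool yields $B/(2n+1)>0$ and dissenting yields $0$. Under sincerity DRep $i$'s own vote is also the pooled one, because her threshold is the same. So the sincere action is uniquely optimal.

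\emph{(b) The others vote their signals.} This happens when $1/2-x<\mu\le 1/2+x$. Let $K$ be the number of other DReps who accept. Then
\[
\mathbb E[\text{payoff}\mid \text{accept},\hat\mu]=\hat\mu\,\mathbb E[h_A(K)\mid G]+(1-\hat\mu)\,\mathbb E[h_A(K)\mid B],
\]
where $h_A(k)=B/(k+1)$ if $k\ge n$ and $0$ otherwise; rejection is handled symmetrically with $2n-K$. Given the state, $K$ is binomial with parameter $p$ under $G$ and $1-p$ under $B$. Symmetry of the signal structure gives $\mathbb E[h_A(K)\mid G]=\mathbb E[h_R\mid B]$ and $\mathbb E[h_A(K)\mid B]=\mathbb E[h_R\mid G]$. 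The payoff difference between accepting and rejecting is therefore
\[
(2\hat\mu-1)\bigl(\mathbb E[h_A\mid G]-\mathbb E[h_A\mid B]\bigr).
\]
It then suffices to show that $\mathbb E[h(K)]$ is strictly larger under $\mathrm{Bin}(2n,p)$ than under $\mathrm{Bin}(2n,1-p)$ for $p>1/2$. This is where the difficulty lies, because $h$ is not monotone: it jumps up at $k=n$ and then decreases as $B/(k+1)$. I would pair the outcomes $k$ and $2n-k$, whose likelihood ratio is $(p/(1-p))^{2k-2n}$, and show that $\sum_k \bigl[\Pr_p(k)-\Pr_{1-p}(k)\bigr]h(k)>0$. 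Equivalently, write $\sum_{k\ge n}$ with weights $B/(k+1)$. Those weights remain positive while the tail probabilities favor $p$, so the jump at $n$ should dominate. If the direct pairing fails, I would compute the closed form $\mathbb E[B/(K+1)\mathbf 1\{K\ge n\}]$ using $\binom{2n}{k}/(k+1)=\binom{2n+1}{k+1}/(2n+1)$, which turns it into a binomial tail probability of $\mathrm{Bin}(2n+1,\cdot)$ divided by $p$. The sign then follows from comparing tails. Given the positive sign, the difference has the sign of $2\hat\mu-1$, which yields strict SVIC whenever $\hat\mu\neq 1/2$.

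For failure at asymmetric profiles, a single counterexample is enough. Take the profile of Theorem~\ref{pr:strict-impossibility}: the others pool and $i$'s sincere vote varies with her signal. After one of her signals, the sincere vote makes her a dissenter who is paid $0$, while conforming pays a positive amount. So SVIC fails.

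\emph{Part 4 (equilibria).} Take pooling on rejection both on path and after every effort deviation; Part 2 makes this a sequentially rational continuation. Under it, $W=N$ whenever there is positive effort, so CPSR pays exactly $f^{psr}$. The effort-stage problem is then identical to PSR, and Proposition~\ref{pr:PSR-properties} gives ER and PC at $x^{psr}$. Acceptance is handled identically. Finally, when $\mu\le 1/2-x^{psr}$ (respectively $\mu>1/2+x^{psr}$), sincere voting coincides with this pool at $\mathbf x^*$, so the equilibrium is sincere, and strict SVIC follows from Part 3(a).
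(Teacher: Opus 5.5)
Your proposal is correct and follows the paper's proof essentially step for step: exact sharing gives BC, the cardinality requirement makes pooling a voting-stage equilibrium, state symmetry reduces the signal-contingent symmetric case to $(2\hat\mu-1)$ times a fixed gap, and universal pooling reduces the effort stage to PSR; your asymmetric counterexample family from Theorem~\ref{pr:strict-impossibility} contains the paper's Example~\ref{pr:SVIC-example} as a special case. The step you flag as the main obstacle is immediate and is exactly the paper's $H(p)$: $h$ vanishes for $k<n$, and $\Pr_p(k)\geq\Pr_{1-p}(k)$ for $k\geq n$ (their ratio is $(p/(1-p))^{2k-2n}$), so every term of $\sum_k[\Pr_p(k)-\Pr_{1-p}(k)]h(k)$ is nonnegative and the $k=2n$ term is strictly positive, which makes the non-monotonicity of $h$ irrelevant; by contrast, your closed-form fallback would also need the strict jury inequality $F_n(x)>\tfrac12+x$, because for $p<1$ the gap equals $B[F_n(x)-p]/[(2n+1)p(1-p)]$, so a bare tail comparison does not suffice.
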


The proposition shows that strict incentives for sincere
voting can coexist with multiple voting equilibria, including insincere equilibria. At a symmetric positive effort $x$, sincere voting follows signals when $|\mu-1/2|<x$. In this region, it coexists with pooling voting-stage equilibria that are insincere after one of the signals. For sufficiently extreme priors, pooling on the prior-favored alternative is itself sincere and satisfies strict SVIC, while pooling on the opposite alternative remains insincere. More generally, every CPSR equilibrium with positive total effort has a counterpart obtained by reversing all continuation voting rules, preserving effort and expected payoffs while replacing success probability by its complement.

At effort $x^{psr}$, both pooling outcomes can also be
sustained as full symmetric equilibria. Rejection is
sincere when $\mu\leq1/2-x^{psr}$, and acceptance when
$\mu>1/2+x^{psr}$. Supporting a full equilibrium with
sincere, signal-contingent voting additionally requires
verifying effort incentives under the chosen continuations.

Appendix~\ref{app:cpsr-asymmetric} illustrates voting
incentives across priors at the symmetric effort profile
$x=x^{psr}$ (see Figure~\ref{fig:cpsr-prior-regions-n1}) and provides an asymmetric-profile example
where CPSR fails SVIC (Example~\ref{pr:SVIC-example}).

Proposition~\ref{pr:CPSR-properties} establishes full symmetric pooling equilibria at effort $x^{psr}$, which are sincere and satisfy strict SVIC under sufficiently extreme priors. The
following theorem extends this conclusion to every interior prior by prescribing sincere voting at the symmetric effort profile and pooling on rejection after any other effort profile.
These continuations are voting-stage equilibria, and the proof shows that no DRep can profit from an effort deviation, even when choosing her best subsequent vote. Sincerity is required on the equilibrium path, but not after deviations.

\begin{theorem}[An equilibrium with strict SVIC for every prior]
\label{thm:cpsr-benchmark-sincere}
Suppose $n\geq1$, $B>0$, and $a>0$. For every prior
$\mu\in(0,1)$, CPSR admits a symmetric equilibrium satisfying
ER, strict SVIC, BC, and PC, in which every DRep chooses
\[
x^*=x^{psr}
=\min\left\{\frac12,\frac{2nB}{a(2n+1)^2}\right\}
\]
and votes sincerely on the equilibrium path. All voting
continuations are sequentially rational.
\end{theorem}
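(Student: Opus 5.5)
The plan is to build the equilibrium explicitly from the continuation prescribed in the discussion before the theorem. At the symmetric profile $\mathbf x^*=(x^{psr},\ldots,x^{psr})$ every DRep votes sincerely. After any other effort profile, all DReps pool on rejection. BC holds by Proposition~\ref{pr:CPSR-properties}(1), so three things remain to verify: sequential rationality of the continuations, strict SVIC on the path, and ER. PC will then follow from ER, since transfers are nonnegative and zero effort guarantees a nonnegative payoff.

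Sequential rationality off the path is Proposition~\ref{pr:CPSR-properties}(2): pooling on rejection is a voting-stage equilibrium after every effort profile. On the path, $x^{psr}>0$ because $B>0$, so Proposition~\ref{pr:CPSR-properties}(3) gives strict SVIC at the symmetric positive profile $\mathbf x^*$. That also makes sincere voting a best response there. Hence all continuations are sequentially rational.

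For ER, I first compute the on-path payoff. Under symmetric effort and sincere voting there are two cases.
\begin{itemize}
\item If sincere votes pool, then all $2n+1$ DReps form the winning coalition.
\item If sincere votes follow signals, the winning coalition is the majority by count, since equal weights make weight-majority and headcount-majority coincide. Ties cannot occur with $2n+1$ equal weights. The winning coalition therefore always has at least $n+1$ members.
\end{itemize}
Each member receives $B/|W|$. In expectation each DRep gets $B/(2n+1)$ by symmetry: the total $B$ is always paid out and is split across $2n+1$ exchangeable DReps. The on-path payoff is therefore $B/(2n+1)-a x^{psr}$.

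Next I bound the payoff of a deviation $x_i\neq x^{psr}$. After such a deviation everyone else pools on rejection. If DRep $i$ also rejects, the coalition is $N$, which has $2n+1\geq n+1$ members, and she receives the PSR share $Bx_i/(x_i+2nx^{psr})$. If she accepts, her singleton acceptance coalition is unpaid regardless of which side wins. Either the rejecting coalition of $2n$ wins, and she is excluded from it, or her coalition of size $1<n+1$ wins, which is unpaid. So her best deviation payoff is exactly the PSR deviation payoff $Bx_i/(x_i+2nx^{psr})-ax_i$. Proposition~\ref{pr:PSR-properties} establishes that $x^{psr}$ maximizes this over $[0,1/2]$ against others at $x^{psr}$, with value $B/(2n+1)-ax^{psr}$ at $x_i=x^{psr}$. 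That value coincides with the on-path payoff, so no deviation is profitable.

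I expect the main obstacle to be the on-path expected-payment computation in the signal-contingent regime, specifically justifying that the expected share equals exactly $B/(2n+1)$ despite the random coalition size. I would handle it by the exchangeability argument: the full budget is paid in every realization, and the sum of expected payments equals $B$. A second point needing care is confirming that the PSR effort-stage optimality from Proposition~\ref{pr:PSR-properties}, including the corner case $x^{psr}=1/2$, transfers directly as stated.
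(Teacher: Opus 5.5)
Your proposal is correct and matches the paper's proof. It uses the same continuation (sincere at $(x^{psr},\ldots,x^{psr})$, pooling on rejection elsewhere), the same appeal to Proposition~\ref{pr:CPSR-properties} for sequential rationality and strict SVIC, the same on-path payoff $B/(2n+1)-ax^{psr}$, and the same reduction of every effort deviation to $Bz/(z+2nx^{psr})-az$; the differences are cosmetic: the paper re-derives concavity instead of citing Proposition~\ref{pr:PSR-properties}, gets PC from the explicit bound $\pi_i^*\geq B/(2n+1)^2$, and adds that a zero-effort DRep, who cannot infer the profile, still knows the symmetric profile was not reached and so can implement pooling.
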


The theorem establishes that CPSR can support the same effort and decision accuracy as sincere PSR while making sincere voting strictly optimal whenever the posterior $\hat{\mu}\neq 1/2$.
This conclusion holds for every interior prior, whether sincere voting is signal-independent or signal-contingent. Other equilibria remain possible, so the result does not guarantee that this
equilibrium is selected. The remaining question is whether CPSR can also support higher effort and accuracy than PSR. We next examine effort incentives beyond $x^{psr}$.

\subsection{Effort incentives beyond the benchmark}

An effort deviation can change both the deviator's voting incentives
and those of the remaining DReps. The next lemma identifies sincere
continuations that can be used when studying efforts different from
$x^{psr}$.

\begin{lemma}[Strict voting incentives following unilateral effort deviations]
\label{lem:sincere_voting_effort_deviation}
Suppose $|\mu-1/2|<x^*\leq1/2$, and all DReps other than $i$ choose effort $x^*$.
\begin{enumerate}
\item If the other DReps vote sincerely, DRep $i$ satisfies the individual incentive requirement of strict SVIC at every $x_i\in(0,1/2]$: sincere voting is a best response and is
uniquely optimal whenever $\hat\mu_i\neq1/2$.
At $x_i=0$, both votes yield zero payment.

\item If $|\mu-1/2|<x_i\leq1/2$, sincere voting is
the unique best response for every DRep at each signal
having positive probability when the other DReps vote
sincerely. Hence CPSR satisfies strict SVIC at
$(x_i,x^*,\ldots,x^*)$, and sincere voting is a
voting-stage equilibrium there.
\end{enumerate}
\end{lemma}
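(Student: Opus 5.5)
The plan is to compute, for the profile $(x_i,x^*,\ldots,x^*)$ with the other $2n$ DReps voting sincerely, the expected transfer to DRep $i$ from each vote, conditional on her posterior. We then compare the two.

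Because $|\mu-1/2|<x^*$, each other DRep's sincere vote follows her signal. Conditional on the state, the number $K$ of other DReps voting for acceptance is therefore $\mathrm{Bin}(2n,p^*)$ in state $G$ and $\mathrm{Bin}(2n,1-p^*)$ in state $B$, where $p^*=1/2+x^*$.

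\textbf{Step 1: reduce payment eligibility to a count condition.} The cardinality requirement $|W|\geq n+1$ means that whenever $i$ is paid, her side contains at least $n+1$ DReps. Suppose the others split $K=k$ for acceptance and $2n-k$ for rejection.
\begin{itemize}
\item If $i$ accepts and $k\geq n$, the accept coalition has at least $n+1$ members. Since the others have equal weights and $2n+1$ DReps are voting, we must check whether this coalition also wins on weight.
\item If $i$ has large $x_i$, a coalition of $n+1$ members including $i$ certainly wins, because its weight exceeds that of the $n$ equal-weight opponents.
\item If $x_i$ is small, the coalition may lose on weight. But then the winning side has at most $n$ members and pays nobody.
\end{itemize}
So in every case $i$ is paid exactly when her side has at least $n+1$ members and wins on weight. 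A losing majority-cardinality side and a winning minority side both pay zero. This leaves the events $k=n$, with $i$ pivotal in cardinality, and $k\geq n+1$ or $k\leq n-1$, where $i$ joins an existing majority.

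When $i$ joins a majority of $m$ others, her share is $B x_i/(x_i+m x^*)$, provided that side wins on weight. We will verify this weight condition: $m\geq n+1$ others already have weight $m x^*\geq (2n-m)x^*+x_i$ unless $x_i$ is large. Where it fails, the side with more weight has fewer than $n+1$ members, so nobody is paid, and $i$'s payment from either vote is zero in that configuration. This is harmless for comparisons only if the failure is symmetric in the two votes, so it needs care.

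\textbf{Step 2: write the payment gap as a posterior-weighted sum.} Write
\[
\Delta(\hat\mu_i)=\mathbb E[f_i\mid\text{accept}]-\mathbb E[f_i\mid\text{reject}].
\]
This equals $\sum_k h(k)\bigl[\hat\mu_i P_G(k)-(1-\hat\mu_i)P_B(k)\bigr]$, where $h(k)$ is $i$'s payment when accepting with $k$ accepting others. We use the symmetry $P_B(k)=P_G(2n-k)$ and the fact that rejecting with $k$ accepting others mirrors accepting with $2n-k$. This gives
\[
\Delta(\hat\mu_i)=\sum_{k} h(k)\,P_G(k)\bigl[\hat\mu_i-(1-\hat\mu_i)\lambda(k)\bigr]-\text{(mirror)},
\]
and it collapses to
\[
\Delta(\hat\mu_i)=\sum_{k=n}^{2n} h(k)\bigl[\hat\mu_i P_G(k)-(1-\hat\mu_i)P_B(k)\bigr]-\sum_{k=n}^{2n} h(k)\bigl[(1-\hat\mu_i)P_G(k)-\hat\mu_i P_B(k)\bigr]
\]
up to the weight-winning caveat. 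That in turn equals
\[
(2\hat\mu_i-1)\sum_{k\geq n} h(k)\bigl[P_G(k)+P_B(k)\bigr],
\]
with $h(k)>0$ for $x_i>0$. Hence $\operatorname{sign}\Delta=\operatorname{sign}(2\hat\mu_i-1)$. This gives strict SVIC for $i$ at every $x_i\in(0,1/2]$, and $h\equiv0$ at $x_i=0$. For Part 1 no condition on $x_i$ is needed, since $i$'s posterior may be signal-independent.

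\textbf{Step 3: Part 2.} Now take $j\neq i$ at the profile $(x_i,x^*,\ldots,x^*)$, with $|\mu-1/2|<x_i$ so that $i$ also votes by her signal. The others are no longer symmetric, because $i$'s weight and precision differ. We repeat Step 2 with $j$ as the deviator. The key symmetry survives because every other DRep's sincere vote is a signal-following vote with state-symmetric precision, so the joint law of the others' votes under $B$ is the mirror image of the law under $G$. $j$'s payment when accepting in a configuration equals her payment when rejecting in the complemented configuration, since shares depend only on the coalition's efforts, and the mirror maps winning coalitions to winning coalitions. The same factorization $(2\hat\mu_j-1)\cdot(\text{positive})$ follows, where positivity comes from the configuration in which all others match $j$, which is paid and has positive probability. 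Together with Part 1 for $i$, this gives strict SVIC and a voting-stage equilibrium.

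\textbf{Main obstacle.} The main obstacle is the weight-versus-cardinality interaction when $x_i$ is large. A coalition with at least $n+1$ members containing $i$ might lose on weight, or one containing only $j$-type DReps might lose to $i$ plus fewer members. I expect the mirror argument to handle this uniformly: define $h$ via the actual CPSR payment in each configuration and use the complement map, avoiding case analysis. It remains to verify that the payment is nonnegative in each configuration and strictly positive for the unanimous one.
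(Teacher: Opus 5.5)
There is a genuine gap. It starts with a sign error in Step~2, and that error hides the real difficulty in Part~2.

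Given posterior $\hat\mu_i$, the expected payment from accepting is $\sum_k h(k)\bigl[\hat\mu_i P_G(k)+(1-\hat\mu_i)P_B(k)\bigr]$, with a plus sign, not a minus. Run your mirror argument correctly: rejecting with $k$ accepting others pays $h(2n-k)$, and $P_B(k)=P_G(2n-k)$. This gives
\[
\Delta(\hat\mu_i)=(2\hat\mu_i-1)\sum_{k=n}^{2n}h(k)\bigl[P_G(k)-P_B(k)\bigr],
\]
which involves a \emph{difference} of probabilities, not $P_G+P_B$. Your version cannot be right: it would predict a strict preference even if the others' votes carried no information about the state, when accepting and rejecting are exact mirror images.

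For Part~1 the repair is easy, and it is exactly the paper's pairing of $k=n+r$ with $k=n-r$. We have $h(k)=0$ for $k<n$ and $P_G(n)=P_B(n)$. Also $P_G(k)>P_B(k)$ for $k>n$ because $p^*>1/2$. Separately, weight victory is automatic for $i$'s own coalitions, since $x_i+2(k-n)x^*>0$ for $k\ge n$. So your weight caveat in Step~1 only matters in Part~2.

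In Part~2 the same reduction gives $\Delta_j=(2\hat\mu_j-1)\sum_c h_j(c)\bigl[P_G(c)-P_B(c)\bigr]$, summing over vote configurations $c$ of the other $2n$ DReps. Now some configurations in which $j$ is paid have $P_G(c)<P_B(c)$. Two examples:
\begin{itemize}
\item Downward deviation ($x_i<x^*$). Suppose $i$ accepts together with exactly $n-1$ of the effort-$x^*$ DReps. Then $j$'s accepting coalition has $n+1$ members and weight $x_i+nx^*>nx^*$, so $j$ is paid. Yet $P_G(c)-P_B(c)=(p^*q^*)^{n-1}(x_i-x^*)<0$.
\item Upward deviation ($x_i>x^*$). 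Take $n=1$, with $i$ rejecting and the third DRep accepting. Then $P_G(c)-P_B(c)=x^*-x_i<0$, while $j$ is paid $B/2$ whenever $x_i<2x^*$.
\end{itemize}
These are precisely the configurations where weight victory, ties, and cardinality interact.

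Your proposed criterion, that payments are nonnegative and positive in the unanimous configuration, therefore does not imply that $\Delta_j$ has the sign of $2\hat\mu_j-1$. You must show that the positive terms dominate the negative ones, and that is the substantive content of the paper's proof:
\begin{itemize}
\item For $x_i\le x^*$, it combines the negative $k=n-1$ term with the two $k=n$ terms. It uses $(n+1)(n+\rho)\ge n+1+\rho$ and $1+\rho>(1-\rho)^2$, where $\rho=x_i/x^*$.
\item For $x_i>x^*$, it pairs each weight-contingent term with the same-index term. Bernoulli's inequality then shows that the normalized likelihood gap is at least $x_i$, using $x_i\le(d+1)x^*$ and $x_i\le1/2$.
\item It gives fair-coin ties half weight, and treats $x^*=1/2$ separately.
\end{itemize}
Without some such domination estimate, Part~2 is unproved.
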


The key is how payments compare across the two states. Under the lemma's conditions, acceptance gives DRep $i$ a higher expected payment in state $G$, while rejection gives an equally large advantage in state $B$. Her posterior determines how she weighs these two possibilities, so supporting the more likely state maximizes her expected payment, strictly whenever her effort is positive and her posterior differs from $1/2$.
Part~2 establishes that the other DReps retain this incentive when the deviator's sincere vote also follows her signal.

Example~\ref{ex:interior_SC_equilibria} compares effort
under CPSR and PSR conditional on sincere,
signal-contingent voting. The comparison can change sign
with committee size.
Example~\ref{ex:cpsr-sincere-equilibria} then combines
sincere and pooling voting continuations to establish
full symmetric equilibria satisfying ER, strict SVIC,
BC, and PC over suitable prior ranges.
Both examples and their supporting calculations appear
in Appendix~\ref{app:examples}.

\subsection{Higher effort and decision accuracy}

We now identify conditions under which CPSR supports
strictly higher effort and decision accuracy than
sincere PSR for every committee size and every
interior prior.

\begin{proposition}[Symmetric sincere equilibria with higher interior effort]
\label{prop:cpsr-interior-existence}
Suppose $n\geq1$, $B>0$, $a>0$, and $\mu\in(0,1)$, and let
\[
R_n:=\frac{(2n+1)^2}{4n},
\]
be the reward-cost ratio at which PSR first induces maximal effort, \(x^{psr}=1/2\). 
For every $n$ and $\mu$, there exists
$\varepsilon_n(\mu)\in(0,R_n)$ such that, whenever
\[
R_n-\varepsilon_n(\mu)<\frac{B}{a}<R_n,
\]
CPSR supports a symmetric equilibrium satisfying ER,
strict SVIC, BC, and PC, with effort $x^*$ satisfying
\[
\left|\mu-\frac12\right|<x^{psr}<x^*<\frac12,
\qquad
x^{psr}=\frac{2nB}{a(2n+1)^2}.
\]
Voting is sincere and signal-contingent on the equilibrium
path, and all voting continuations are sequentially rational.
\end{proposition}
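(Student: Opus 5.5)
The plan is a perturbation argument around the corner point $B/a=R_n$, where $x^{psr}=1/2$. I will build the equilibrium exactly as in Theorem~\ref{thm:cpsr-benchmark-sincere}, but with a different on-path effort. At the symmetric profile $(x^*,\ldots,x^*)$ every DRep votes sincerely. After a unilateral deviation to $x_i$ with $|\mu-1/2|<x_i\le 1/2$, the continuation is sincere voting; Lemma~\ref{lem:sincere_voting_effort_deviation}, part~2, makes this a voting-stage equilibrium with strict SVIC. After every other effort profile, the continuation is pooling on rejection, which Proposition~\ref{pr:CPSR-properties}, part~2, shows is a voting-stage equilibrium. Pooling gives a deviator exactly the PSR-type share $Bx_i/(x_i+2nx^*)$ if she votes with the pool, and nothing if she dissents. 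So all continuations are sequentially rational. Strict SVIC on path follows from Proposition~\ref{pr:CPSR-properties}, part~3, and BC holds automatically. PC will follow from ER, since zero effort yields payoff zero.

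\textbf{Effort incentives in the sincere region.} Let $\Pi(x_i;x^*)$ denote the deviator's expected CPSR payoff when everyone votes sincerely, which is the closed form in Appendix~\ref{app:probabilities-payoffs}. Because $|\mu-1/2|<x^*$, sincere voting is signal-contingent. The expected payment then has the form $\sum_k \Pr(\cdot)\,Bx_i/(x_i+kx^*)$, summed over winning coalitions of size $k\ge n$ others plus $i$, where $i$'s inclusion probability depends on $p_i=1/2+x_i$. I will define $x^*$ as a solution of the first-order condition $\partial_{x_i}\Pi(x^*;x^*)=a$. The key computation is to evaluate the marginal benefit $D(x):=\partial_{x_i}\Pi(x_i;x)|_{x_i=x}$ at $x=1/2$. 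There the other DReps have perfect signals, so for $n\ge1$ they form a unanimous majority of size $2n$ matching the state. At $x_i=1/2$ the deviator is always in the winning coalition with share $1/(2n+1)$, so the share margin alone gives $B\cdot 2n/((2n+1)^2\cdot\frac12)$, times a factor. I need to compare this with the PSR marginal $B\,2n/((2n+1)^2 x)$. The eligibility margin adds $\partial_{p_i}\Pr(i\in W)\cdot B/(2n+1)>0$: lowering $x_i$ makes $i$ disagree with the unanimous others and lose her payment. Hence $D(1/2)>a$ at $B/a=R_n$, and by continuity $D(1/2)>a$ for $B/a$ slightly below $R_n$. Meanwhile $x^{psr}<1/2$ there, and for $x$ near $x^{psr}$ the CPSR share margin is weaker. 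Rather than relying on that, I will argue directly. Since $x\mapsto D(x)-a$ is continuous, an equilibrium effort $x^*$ with $D(x^*)=a$ exists in $(x^{psr},1/2)$ provided $D(x^{psr})<a$ or a boundary check holds. The cleaner route is to take $\varepsilon_n(\mu)$ small and use the implicit function theorem at $(x,B/a)=(1/2,R_n)$, where $D(1/2)-a>0$ strictly. This means the relevant equilibrium $x^*$ is interior but close to $1/2$, and it exceeds $x^{psr}$ because $x^{psr}\to1/2$ at rate $1-(B/a)/R_n$, whereas the fixed point approaches at a different rate. I expect this step to be the main obstacle. I will first compute $\Pi$ explicitly near $x=1/2$, expanding in $\delta=1/2-x$ to first order, and find the interior solution $x^*=1/2-\delta^*$ with $\delta^*<\delta^{psr}$. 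If the marginal benefit at $1/2$ strictly exceeds $a$, the best response would instead be the corner $1/2$, so the exact interior claim requires the second-order structure. I would use concavity of $\Pi$ in $x_i$: the share term is concave, and the eligibility term is linear in $p_i$ times a concave share. I would then locate the maximizer through the first-order condition, adjusting $\varepsilon_n(\mu)$ so that $|\mu-1/2|<x^{psr}$ also holds.

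\textbf{Global deviations.} For $x_i$ in the sincere region, concavity of $\Pi(\cdot;x^*)$ together with the FOC at $x^*$ gives optimality. For $x_i\le|\mu-1/2|$, including $x_i=0$, the deviator faces pooling on rejection and earns at most $Bx_i/(x_i+2nx^*)-ax_i$. This is increasing only while $B\,2nx^*/(x_i+2nx^*)^2>a$. Since its value at $x_i$ bounded by $|\mu-1/2|<x^*$ is at most the PSR-type payoff, I will show that it is below $\Pi(x^*;x^*)$. Near $B/a=R_n$, $\Pi(x^*;x^*)\approx B/(2n+1)-a/2$ is bounded below, while the pooling payoff at small $x_i$ is strictly less by continuity. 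This completes ER. Success probability then exceeds PSR's because sincere signal-contingent weighted majority accuracy is strictly increasing in common precision, which is the content of Corollary~\ref{cor:cpsr-higher-success}.
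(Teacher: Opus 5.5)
\textbf{There is a genuine gap at the central step, and it breaks the construction.} You define $x^*$ as a root of the sincere first-order condition $D(x^*)=a$. You also prescribe sincere voting after \emph{every} unilateral deviation with $x_i>|\mu-1/2|$, including upward ones. Your own computation rules out such a root in $(x^{psr},1/2)$ when $B/a$ is near $R_n$:
\begin{itemize}
\item At $B/a=R_n$ and $x=1/2$, the marginal benefit satisfies $D(1/2)/a=(6n+1)/(4n)>1$.
\item By continuity in $(B/a,x)$, $D(x)>a$ on a whole neighborhood of $(R_n,1/2)$.
\item Since $x^{psr}\uparrow 1/2$ as $B/a\uparrow R_n$, for $B/a$ close enough to $R_n$ you get $D(x)>a$ for every $x\in[x^{psr},1/2]$.
\end{itemize}
The implicit function theorem cannot be applied at a point where $D-a$ does not vanish, and there is no fixed point ``approaching at a different rate.'' For $n=1$ this is explicit. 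The paper's signal-following payoff gives $\partial_z\pi^{SC}(z,x)|_{z=x}=B\bigl(5x/6+13/(72x)\bigr)-a$, and $5x/6+13/(72x)\geq 2\sqrt{65/432}\approx0.776$. So the first-order condition has no solution at all once $B/a>1.29$, while $R_1=9/4$. Under your continuations, every candidate $x^*\in(x^{psr},1/2)$ therefore admits a profitable small upward deviation, because the sincere payoff is strictly increasing at $x^*$. Concavity does not rescue this; as you note yourself, it pushes the best response to the cap, contradicting $x^*<1/2$.

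\textbf{The missing idea is to abandon the first-order condition and select continuations asymmetrically in the two deviation directions.}
\begin{itemize}
\item Take $x^*$ slightly above $x^{psr}$, where the sincere marginal payoff is still strictly positive.
\item Set $\tau=(2n+1)x^{psr}-2nx^*$. Then $\tau<x^*$, and $\tau\to x^{psr}$ as $x^*\downarrow x^{psr}$.
\item Choose $x^*$ close enough to $x^{psr}$ that $|\mu-1/2|<\tau$ and the sincere net payoff is increasing on $[\tau,x^*]$.
\item After downward deviations $z\in(\tau,x^*)$, prescribe sincere voting. These deviations are unprofitable because the payoff is increasing there, and Lemma~\ref{lem:sincere_voting_effort_deviation} gives strict SVIC and a voting-stage equilibrium.
\item After upward deviations and deviations to $z\le\tau$, prescribe pooling on rejection. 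The deviator then earns the PSR-type payoff, and
\[
P(z;x^*)-U=a\frac{(z-x^*)(\tau-z)}{z+2nx^*}\leq0
\]
outside $(\tau,x^*)$.
\end{itemize}
Pooling deters the upward deviations precisely because $x^*>x^{psr}$ makes $P(\cdot;x^*)$ decreasing beyond $x^*$. Sincere voting deters the moderate downward deviations that pooling alone would reward. PC then follows from $\tau>0$, which gives $x^*<B/(a(2n+1))$. This same factorization also replaces your informal ``strictly less by continuity'' argument for low-effort deviations.
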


The proposition establishes that, for every committee size
and every interior prior, CPSR can sustain strictly higher
effort than PSR under the stated reward-cost conditions.
The improvement holds at the same budget and effort cost as PSR. 
Its construction combines sincere voting after some effort deviations with pooling after others, with every prescribed continuation being a voting-stage equilibrium. Thus, the result establishes what CPSR can support under suitable continuation selection, rather than predicting which equilibrium participants will select. The reward-cost interval is nonempty, but the result does not quantify its width or provide a
uniform lower bound on the accuracy improvement.

The required reward-cost conditions depend on committee
size. These improvements therefore do not remove the
fixed-budget limitation established in
Theorem~\ref{pr:accuracy}.

The inequality $|\mu-1/2|<x^{psr}<x^*$ ensures that sincere
voting follows signals under both schemes. Thus, higher effort
under CPSR improves signal accuracy and raises
the probability that the majority identifies the state
correctly. The next corollary states this implication.

\begin{corollary}[Higher success probability under sincere CPSR]
\label{cor:cpsr-higher-success}
Under the conditions of Proposition~\ref{prop:cpsr-interior-existence},
the CPSR equilibrium constructed in it with strict SVIC has strictly
greater success probability than the sincere PSR equilibrium:
\[
\Pr(\text{success};v^{sv},\mathbf{x}^*)>
\Pr(\text{success};v^{sv},\mathbf{x}^{psr}),
\]
where $\mathbf{x}^*=(x^*,\ldots,x^*)$ and
$\mathbf{x}^{psr}=(x^{psr},\ldots,x^{psr})$.
\end{corollary}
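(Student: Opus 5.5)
The plan is to reduce the corollary to a monotonicity statement about the success probability under symmetric, sincere, signal-contingent voting. By Proposition~\ref{prop:cpsr-interior-existence} we have $|\mu-1/2|<x^{psr}<x^*<1/2$. By the case analysis following the definition of sincere voting, every DRep therefore accepts after $g$ and rejects after $b$, both at $\mathbf{x}^{psr}$ and at $\mathbf{x}^*$. Under symmetric effort all weights equal $1/(2n+1)$. Since $2n+1$ is odd, no weight tie occurs, and the outcome coincides with the simple majority of signals.

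Let $p=1/2+x$ and let $\Phi_n(p):=\sum_{k=n+1}^{2n+1}\binom{2n+1}{k}p^k(1-p)^{2n+1-k}$ be the probability that a majority of $2n+1$ independent signals of precision $p$ matches the state. Signals are conditionally i.i.d.\ given the state, with symmetric precision. Hence in state $G$ acceptance occurs with probability $\Phi_n(p)$, and in state $B$ rejection occurs with the same probability. This gives
\[
\Pr(\text{success};v^{sv},(x,\ldots,x))=\mu\Phi_n(p)+(1-\mu)\Phi_n(p)=\Phi_n\!\left(\tfrac12+x\right),
\]
which holds for any $x$ with $|\mu-1/2|<x$. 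This expression is consistent with the success-probability formula in Appendix~\ref{app:probabilities-payoffs}. The key point is that the prior drops out, precisely because voting is signal-contingent under both schemes.

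It remains to show that $\Phi_n$ is strictly increasing on $(1/2,1)$. The standard identity is
\[
\Phi_n'(p)=(2n+1)\binom{2n}{n}p^n(1-p)^n,
\]
obtained by telescoping the derivative of the binomial tail, or by noting that $\Phi_n'$ equals the probability that a fixed voter is pivotal, multiplied by $2n+1$. This derivative is strictly positive on $(0,1)$. Applying it with $1/2<1/2+x^{psr}<1/2+x^*<1$ yields $\Phi_n(1/2+x^*)>\Phi_n(1/2+x^{psr})$, which is the claimed strict inequality.

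The only step needing care is justifying that voting is signal-contingent at both profiles, so that the success probability is $\Phi_n(p)$ rather than the prior-based $\max\{\mu,1-\mu\}$. This is exactly where the condition $|\mu-1/2|<x^{psr}$ is used. We also need the boundary convention at posterior $1/2$ not to matter. It does not, because the strict inequalities place $\mu$ strictly inside $(1/2-x,1/2+x)$ at both effort levels. The derivative identity is routine, so I expect no real obstacle.
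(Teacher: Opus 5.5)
Your proposal is correct and follows essentially the same route as the paper. Both arguments use $|\mu-1/2|<x^{psr}<x^*<1/2$ to get signal-contingent sincere voting at both symmetric profiles, reduce success to the prior-free binomial majority tail, and conclude by strict monotonicity via the derivative $(2n+1)\binom{2n}{n}\bigl(\tfrac14-x^2\bigr)^n$, which is your $p^n(1-p)^n$ form with $p=\tfrac12+x$.
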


\paragraph{Illustration.}
Consider three DReps, $B=1.2$, $a=1$, and $\mu=1/2$.
The equilibrium constructed in Examples~\ref{ex:interior_SC_equilibria} and \ref{ex:cpsr-sincere-equilibria} (which use the same continuation structure as Proposition~\ref{prop:cpsr-interior-existence}, see Appendix~\ref{app:examples}) has effort
$x^*\approx0.31743$, compared with $x^{\mathrm{psr}}\approx0.26667$. Its success probability is approximately $0.91217$, compared with $0.86207$ under sincere PSR. Each DRep's expected gross payment is $B/3=0.4$ under both equilibria, so expected net payoffs are approximately $0.08257$ and $0.13333$, respectively. Thus, the more accurate equilibrium
requires greater effort and gives each DRep a lower expected net payoff. CPSR also supports a sincere equilibrium at the PSR effort level.

The strict accuracy comparison relies on sincere voting being
signal-contingent under both schemes. When the prior is
sufficiently extreme that sincere voting instead pools on
the prior-favored alternative under both schemes, their
success probability is the same, namely $\max\{\mu,1-\mu\}$.
In particular, this applies to the sincere pooling equilibria
at effort $x^{psr}$ established above. CPSR still provides
strict SVIC at those profiles, whereas PSR supports sincere
voting through indifference.

Once $B/a\geq(2n+1)^2/(4n)$, PSR already induces
maximal effort. CPSR supports the same effort and perfect
accuracy for every interior prior, together with strict
SVIC. Proposition~\ref{prop:maximal-effort} in the appendix
states this result formally.

%% file: sections/conclusion-short.tex
\section{Conclusion and Discussion}\label{section:conclusions}

This paper studies how a limited reward budget can encourage
delegated representatives to acquire information and vote sincerely.
Both incentives matter for governance: compensating representatives
for their effort is useful only if the information they acquire
contributes to better collective decisions.

Conditioning rewards on a sufficiently large winning coalition
can make sincere voting strictly optimal while preserving the
effort supported by proportional sharing. Under suitable
reward-cost conditions, CPSR can also sustain higher effort
and more accurate collective decisions. These are equilibrium
existence results: insincere equilibria may coexist, and the
fixed-budget limitation on accuracy in large symmetric
committees remains.

Equilibrium selection is particularly important because greater
accuracy need not benefit the DReps themselves. At the symmetric
equilibria compared here, each DRep receives the same expected
gross payment, so higher effort reduces her expected net payoff:
\[
\frac{B}{2n+1}-ax^*
<
\frac{B}{2n+1}-ax^{psr}
\qquad\text{when }x^*>x^{psr}.
\]
CPSR also supports a sincere equilibrium at the lower effort
$x^{psr}$. Thus, every DRep prefers that equilibrium in expected
net-payoff terms, even when the higher-effort equilibrium yields
more accurate decisions. This comparison does not undermine
individual optimality within either equilibrium, but shows why
existence alone does not determine which outcome will emerge.
Appendix~\ref{app:implementation} outlines a possible implementation
path for Cardano, combining temporary ballot confidentiality with
public participation tracking and CPSR reward settlement.

Several research directions follow. First, repeated voting and changes
in delegation could help explain equilibrium selection and
whether reputational incentives favor informed, sincere behavior.
Second, supermajority requirements and penalties for supporting
the losing alternative could alter effort incentives, but their
effects must be assessed jointly with strict SVIC and participation
constraints. Finally, implementation deserves attention: when
interim votes are observable, representatives may wait and imitate
others rather than acquire information. Studying whether concealed
votes until a common deadline can preserve the incentives identified
here would connect the simultaneous-voting model to practical
governance procedures.

%% file: sections/appendix-proof-short.tex
\section{Missing Proofs}\label{appendix:proofs}

%------------------------------------------------------
\subsection{Proof of Theorem \ref{pr:strict-impossibility}}
\begin{proof}

Fix $\mu\in(0,1)\setminus\{1/2\}$ and consider a positive effort profile satisfying the inequalities in the statement. Such a profile exists because $0<|\mu-1/2|<1/2$.

At this profile, all DReps other than $i$ sincerely reject after either signal if $\mu<1/2$, and sincerely accept after either signal if $\mu>1/2$. Holding these strategies fixed, each action of DRep $i$ determines the winning coalition and her transfer, apart from any independent tie-breaking coin. Since payments depend neither on her private signal nor on the true state, acceptance gives the same expected transfer $T_A$ after either signal, and rejection gives the same expected transfer $T_R$ after either signal. This follows from the payment structure, without imposing SVIC; it does
not yet imply that $T_A=T_R$.

DRep $i$, however, sincerely accepts after $g$ and rejects after $b$. SVIC therefore requires $T_A\geq T_R$ after $g$ and $T_R\geq T_A$ after $b$, implying
$T_A=T_R$. Thus, any rule satisfying SVIC at this profile must leave DRep $i$ indifferent between the two actions after either signal.

Both signals have positive probability and both posteriors differ from $1/2$, so this indifference rules out strict SVIC. Hence strict SVIC cannot hold at this profile and therefore cannot hold at every positive effort profile.
\end{proof}

%-------------------------------------------------------
\subsection{Proof of Theorem \ref{pr:accuracy}}
\begin{proof}
First, PC and BC give
\[
a\sum_i x_i^*
\leq\sum_i\mathbb E[f_i(W,\mathbf x^*)]
=\mathbb E\!\left[\sum_i f_i(W,\mathbf x^*)\right]
\leq B.
\]
Hence $x^*\leq B/[a(2n+1)]$ at a symmetric profile.

Next suppose $\mu\geq1/2$. By the definition of success,
\begin{align}
\Pr(\text{success};v^*,\mathbf x^*)
&=\mu+(1-\mu)\Pr(\text{reject}\mid B)
       -\mu\Pr(\text{reject}\mid G)\notag\\
&\leq\mu+(1-\mu)
       \bigl[\Pr(\text{reject}\mid B)
                  -\Pr(\text{reject}\mid G)\bigr].
\label{pr:prior-bound}
\end{align}
We bound the difference in brackets by considering the full vector of private signals. Hold the prior, effort profile, and voting strategies fixed. For each signal vector, the probability of rejection
is a number in $[0,1]$. It is the same whether that vector arises in state $B$ or state $G$, because votes cannot additionally depend on the unobserved true state. Only the probability of observing
the vector changes across states. The rejection-probability difference is therefore a weighted sum of these differences in signal-vector probabilities, with weights in $[0,1]$. Dropping negative terms and replacing the remaining weights by one gives an upper bound: the sum of the positive differences between the probabilities
under $B$ and under $G$.

Put $p=1/2+x^*$. When $0<x^*<1/2$, a signal vector with
$k$ bad signals has probabilities
$p^k(1-p)^{2n+1-k}$ under $B$ and
$p^{2n+1-k}(1-p)^k$ under $G$.
The first exceeds the second exactly when $k\geq n+1$. 
Summing these positive differences, let
\[
F_n(x):=\sum_{k=n+1}^{2n+1}\binom{2n+1}{k}
\left(\frac12+x\right)^k
\left(\frac12-x\right)^{2n+1-k}.
\]
The probability of a majority of bad signals is $F_n(x^*)$
in state $B$ and $1-F_n(x^*)$ in state $G$, by symmetry and the odd committee size. Hence,
\begin{align*}
\Pr(\text{reject}\mid B)-\Pr(\text{reject}\mid G)
&\leq F_n(x^*)-[1-F_n(x^*)],\\
&=2F_n(x^*)-1,\\
&=2\sum_{k=n+1}^{2n+1}
\binom{2n+1}{k}p^k(1-p)^{2n+1-k}-1.
\end{align*}

At $x^*=0$ the signal distributions are identical, so the
difference is zero. At $x^*=1/2$ their supports are disjoint,
and the bound is one. Thus the displayed inequality also
covers both endpoints.

To connect this bound to the effort budget, we use
 \(F_n(x^*)-F_n(0) \leq x^*\max_{t\in[0,x^*]}F_n'(t),\)
where $F_n(0)=1/2$. Differentiating gives
\[
\frac{d}{dx}
\sum_{k=n+1}^{2n+1}
\binom{2n+1}{k}
\left(\frac12+x\right)^k
\left(\frac12-x\right)^{2n+1-k}
=
(2n+1)\binom{2n}{n}\left(\frac14-x^2\right)^n.
\]
Indeed, use
$k\binom{2n+1}{k}=(2n+1)\binom{2n}{k-1}$
and
$(2n+1-k)\binom{2n+1}{k}=(2n+1)\binom{2n}{k}$;
all interior terms cancel.
The tail equals $1/2$ at $x=0$, so (using the fundamental theorem of calculus $F_n(x^*)-F_n(0)=\int_0^{x^*}F_n'(t)\,dt.$)
\begin{align}
&\sum_{k=n+1}^{2n+1}\binom{2n+1}{k}
 \left(\frac12+x^*\right)^k
 \left(\frac12-x^*\right)^{2n+1-k}-\frac12
\notag\\
&\qquad=(2n+1)\binom{2n}{n}
       \int_0^{x^*}\left(\frac14-x^2\right)^n\,dx
\leq (2n+1)x^*\frac{\binom{2n}{n}}{4^n}.
\label{pr:tail-integral}
\end{align}
Combining this inequality with \eqref{pr:prior-bound} and the
effort budget proves \eqref{pr:accuracy-bound} for $\mu\geq1/2$.
Exchanging the states and the two decisions proves the other case.
The asymptotic claim follows from
$\binom{2n}{n}/4^n\sim1/\sqrt{\pi n}$.
\end{proof}

%------------------------------------------------------

% \begingroup
% \setlength{\emergencystretch}{1em}
\subsection{Proof of Proposition \ref{pr:PSR-properties}}
\begin{proof}
Rewards sum to $B$ whenever total effort is positive.
Since neither a DRep's vote nor the other votes affect her reward,
both voting actions are optimal at every information set.
Sincere continuations are therefore rational after every effort
profile, but sincerity is not strictly optimal.

Against effort $x>0$, a DRep's payoff is
\[
B\frac{x_i}{x_i+2nx}-ax_i.
\]
Its first derivative is $2nBx/(x_i+2nx)^2-a$ and its second
derivative is $-4nBx/(x_i+2nx)^3<0$.
At $x_i=x$, the first derivative is
$2nB/[(2n+1)^2x]-a$.
Strict concavity gives the stated symmetric global best response:
the derivative is zero at the interior value, and nonnegative
at $x=1/2$ precisely when that value is at least the effort cap.
This proves ER. Finally,
\[
\pi_i=\frac{B}{2n+1}-ax^{psr}
\geq\frac{B}{(2n+1)^2}>0,
\]
which proves PC.
\end{proof}

%------------------------------------------------------

\subsection{Proof of Proposition~\ref{pr:CPSR-properties}}
\label{app:cpsr-basic}
\begin{proof}
An eligible winning coalition shares exactly $B$; otherwise all
rewards are zero. This establishes BC.

Suppose all DReps other than $i$ reject independently of their
signals. Joining rejection gives $Bx_i/\sum_jx_j$ when total effort
is positive, and zero otherwise. Accepting gives zero: either the
accepting coalition loses or it is a singleton and fails the cardinality
requirement. The conclusion also holds at a weighted tie, whichever
alternative the fair coin selects. Pooling on rejection is therefore
a voting-stage equilibrium at every profile. The same reasoning applies
to acceptance, including zero-effort profiles. If $x_i>0$, joining
the pooling vote gives a strictly positive payment, while dissenting
gives zero, so joining is the unique best response. This is strict
SVIC only when that pooling action is sincere.

At a symmetric profile with $|\mu-1/2|<x\leq1/2$, sincere votes
follow signals. Put $p=1/2+x$ and let $K$ count other DReps voting to accept.
Conditional on $G$, $K\sim\operatorname{Binomial}(2n,p)$; in state
$B$, replace $p$ by $1-p$. The payment difference between acceptance
and rejection, divided by $B$, at $K=k$ is
\[
h_k=\frac{\mathbf{1}\{k\geq n\}}{k+1}
-\frac{\mathbf{1}\{k\leq n\}}{2n-k+1}.
\]
Pairing $k=n+r$ with $k=n-r$ gives
\[
H(p)=\sum_{r=1}^{n}\frac{\binom{2n}{n+r}}{n+r+1}
\left[p^{n+r}(1-p)^{n-r}-p^{n-r}(1-p)^{n+r}\right]>0.
\]
Every term is positive for $1/2<p<1$, and $H(1)=1/(2n+1)>0$.
Also $H(1-p)=-H(p)$. At posterior $\hat\mu_{i,s}$,
\[
U_i^A(s)-U_i^R(s)=B(2\hat\mu_{i,s}-1)H(p).
\]
The sign agrees with the posterior preference, establishing strict SVIC.

It remains to cover symmetric profiles at priors outside the strict
signal-contingent region. If $\mu\leq1/2-x$, sincere voting pools
on rejection; if $\mu>1/2+x$, it pools on acceptance. In either
case, joining the sincere vote gives $B/(2n+1)>0$, while dissenting
gives zero. Thus strict SVIC holds at these profiles as well.
At $\mu=1/2+x$, the convention of rejecting at posterior $1/2$
makes sincere voting follow signals. The same expression involving
$H(p)>0$ applies: acceptance is uniquely optimal after a good signal,
and both votes are optimal after a bad signal. This satisfies strict
SVIC as defined, since uniqueness is required only when the posterior
differs from $1/2$. The boundary priors belong to $(0,1)$ only
when $x<1/2$. Together, these cases establish strict SVIC for every
symmetric positive profile and every interior prior.

The failure of ordinary SVIC at an asymmetric positive profile is
established in Example~\ref{pr:SVIC-example} (see Appendix~\ref{app:cpsr-asymmetric}).

Finally, prescribe the same pooling continuation after every effort
profile and put $x_0=x^{psr}$. When the other DReps exert effort $x_0$,
the best continuation vote is to join them, giving
\[
P(z;x_0)=B\frac{z}{z+2nx_0}-az,
\qquad
\frac{\partial P}{\partial z}=\frac{2nBx_0}{(z+2nx_0)^2}-a.
\]
This payoff is strictly concave. Its derivative is zero at $z=x_0$
when $x_0<1/2$, and nonnegative there when $x_0=1/2$. Thus $x_0$
is a global best response on $[0,1/2]$. Voting deviations cannot
improve the payoff, so ER holds. Moreover,
\[
\frac{B}{2n+1}-ax_0\geq\frac{B}{(2n+1)^2}>0,
\]
which gives PC. Applying either pooling continuation yields the two
claimed equilibria. Sincere voting pools on rejection if
$\mu\leq1/2-x_0$, and on acceptance if $\mu>1/2+x_0$.
The strictly positive gain from joining these sincere pooling votes
establishes strict SVIC in the respective prior regions.
\end{proof}

%\subsection{Why the cardinality requirement matters for effort incentives}
%\label{app:cpsr-capture}
%Let $n=1$, $B=a=1$, and $x^{psr}=2/9$. Choose
%$0<\eta<1/18$ and $0<\mu<1/18-\eta$. At the symmetric profile,
%sincere voting pools on rejection and each DRep obtains $1/9$.
%If the cardinality condition is removed, a deviation to
%$z=4/9+\eta$ gives one DRep more than half the voting weight.
%By accepting alone she receives the entire budget and obtains
%\[
%1-z=\frac59-\eta>\frac19.
%\]
%The prior condition ensures that even after a good signal her sincere
%action remains rejection; the profitable vote is insincere.
%Restoring the cardinality requirement gives this singleton zero
%payment. Under pooling on rejection after every effort profile,
%joining the others yields $z/(z+4/9)-z$, maximized at $z=2/9$.
%Thus no unilateral effort or voting deviation is profitable under
%that continuation prescription.

%------------------------------------------------------

\subsection{Proof of Theorem~\ref{thm:cpsr-benchmark-sincere}}
\begin{proof}
Write $x_0=x^{psr}$. Following the effort profile
$\bar{\mathbf{x}}=(x_0,\ldots,x_0)$, prescribe sincere voting.
Following every other effort profile, prescribe pooling on rejection. A positive-effort DRep can infer the profile from her effort and the
public weights; a zero-effort DRep knows that the symmetric positive profile was not reached and selects pooling.

At $\bar{\mathbf{x}}$, strict SVIC holds for every interior prior by Proposition~\ref{pr:CPSR-properties}, including the threshold values separating sincere pooling from signal-contingent voting. Hence sincere voting is a voting-stage equilibrium there, with a unique best response
whenever the posterior differs from $1/2$. At every other effort profile, pooling on rejection is a voting-stage equilibrium by the same proposition. Thus the prescribed voting continuations are sequentially rational, although they need not be sincere after effort deviations.

At $\bar{\mathbf{x}}$, equal voting weights imply that the winning coalition contains at least $n+1$ DReps, so the entire budget $B$ is distributed in every realization. By symmetry, each DRep's expected
gross payment is $B/(2n+1)$, whether sincere voting is pooling or signal-contingent. Thus her equilibrium net payoff is
\[
\pi_i^*=\frac{B}{2n+1}-ax_0.
\]

Consider any unilateral effort deviation to $z\neq x_0$. The other
DReps reject under the prescribed continuation. Joining them gives
DRep $i$ the payoff
\[
P(z;x_0)=B\frac{z}{z+2nx_0}-az.
\]
Voting to accept instead gives zero gross payment: either rejection
wins, or DRep $i$ forms a singleton winning coalition that fails the
cardinality requirement. Consequently, $P(z;x_0)$ is her maximum
payoff over all continuation voting rules, including randomized rules.

The function $P(z;x_0)$ is strictly concave in $z$, with
\[
\frac{\partial P(z;x_0)}{\partial z}
=\frac{2nBx_0}{(z+2nx_0)^2}-a.
\]
If $x_0<1/2$, this derivative is zero at $z=x_0$. If $x_0=1/2$,
it is nonnegative at the upper bound, so the constrained maximum is
again attained at $z=x_0$. Therefore,
\[
P(z;x_0)\leq P(x_0;x_0)
=\frac{B}{2n+1}-ax_0=\pi_i^*
\qquad\text{for every }z\in[0,1/2].
\]
A deviation in voting alone is unprofitable by on-path voting
optimality. Hence, no joint effort and voting deviation is profitable,
which establishes ER.

Under the conditional proportional sharing rule, total payments equal
$B$ when the winning coalition is payment-eligible and zero otherwise,
so BC holds. Finally, the definition of $x_0$ implies
$ax_0\leq 2nB/(2n+1)^2$, and therefore
\[
\pi_i^*=\frac{B}{2n+1}-ax_0
\geq\frac{B}{(2n+1)^2}>0,
\]
which establishes PC. Since the effort choices and continuation rules
are invariant under relabeling DReps, the equilibrium is symmetric.
\end{proof}

%------------------------------------------------------

\subsection{Proof of Lemma~\ref{lem:sincere_voting_effort_deviation}}
\begin{proof}
\textit{Part 1.}
Suppose all DReps $j\neq i$ vote sincerely. Since
$|\mu-1/2|<x^*$, they accept after a good signal and reject after a
bad signal. Put $p^*=1/2+x^*$ and $q^*=1/2-x^*$. If $K$ denotes
the number of these DReps who vote to accept, then
\[
K\mid G\sim\mathrm{Binomial}(2n,p^*),
\qquad
K\mid B\sim\mathrm{Binomial}(2n,q^*).
\]

Consider first $x_i>0$. If DRep $i$ accepts, her coalition is
payment-eligible and wins whenever $K=k\geq n$, since
$x_i+kx^*-(2n-k)x^*=x_i+2(k-n)x^*>0$.
For $k<n$, her coalition receives no payment even if it wins by weight.
If she rejects, the corresponding condition is $k\leq n$.
Thus, define
\[
h_k(x_i,x^*):=
\frac{\mathbf{1}\{k\geq n\}}{x_i+kx^*}
-\frac{\mathbf{1}\{k\leq n\}}{x_i+(2n-k)x^*},
\]
so that $Bx_i h_k(x_i,x^*)$ is her payment advantage from accepting
rather than rejecting. Conditional on state $G$, write
\[
H(x_i,x^*;p^*)=
\sum_{k=0}^{2n}\binom{2n}{k}(p^*)^k(q^*)^{2n-k}h_k(x_i,x^*).
\]
Pairing $k=n+r$ with $k=n-r$ gives
\[
H(x_i,x^*;p^*)=
\sum_{r=1}^{n}\binom{2n}{n+r}
\frac{(p^*)^{n+r}(q^*)^{n-r}
      -(p^*)^{n-r}(q^*)^{n+r}}
     {x_i+(n+r)x^*}>0.
\]
The terms are nonnegative and at least one is strictly positive,
including when $p^*=1$. Reversing the states changes the sign of this
expression. Hence, at posterior $\hat\mu_i$, her expected payment
advantage from accepting rather than rejecting is
\[
Bx_i(2\hat\mu_i-1)H(x_i,x^*;p^*).
\]
Since $x_i>0$ and $H(x_i,x^*;p^*)>0$, posterior-sincere voting
is the unique best response whenever $\hat\mu_i\neq1/2$.
At $\hat\mu_i=1/2$, the payment advantage is zero, so the prescribed
sincere rejection is weakly optimal. At $x_i=0$, both actions yield
zero payment and only weak optimality can be asserted.

\medskip
\noindent\textit{Part 2.}
Write $x=x^*$ and $z=x_i$, and suppose $|\mu-1/2|<z\leq1/2$.
All DReps' sincere voting rules follow their signals. Part~1 establishes
DRep $i$'s voting incentives. Consider any DRep $j\neq i$, whose effort
is $x$. First let $x<1/2$, and put
\[
p=\frac12+x,\quad q=\frac12-x,\quad
r=\frac12+z,\quad s=\frac12-z,\quad \rho=\frac{z}{x}>0.
\]
Among the DReps other than \(j\), let $K$ count the accepting DReps with effort
$x$, and let $L$ indicate acceptance by DRep $i$. Conditional on $G$,
these variables are independent, with
\[
K\sim\mathrm{Binomial}(2n-1,p),\qquad L\sim\mathrm{Bernoulli}(r).
\]
If $j$ accepts and $L=1$, her coalition contains $k+2$ DReps.
It meets the cardinality requirement if and only if $k\geq n-1$;
then its weight exceeds the opposing weight by $z+2(k-n+1)x>0$.
Her payment is consequently $B/(k+1+\rho)$.
If $j$ accepts and $L=0$, cardinality requires $k\geq n$.
A strict victory requires $2(k-n+1)x>z$, and equality is
resolved by the fair coin. Thus, define
\[
\lambda_k(\rho):=
\begin{cases}
1,&\rho<2(k-n+1),\\
1/2,&\rho=2(k-n+1),\\
0,&\rho>2(k-n+1),
\end{cases}
\qquad k=n,\ldots,2n-1.
\]
Her expected payment in this case is $B\lambda_k(\rho)/(k+1)$.
Both victory by weight and the cardinality requirement are imposed.

For $k=0,\ldots,2n-1$, write
\[
\alpha_k=\binom{2n-1}{k}p^kq^{2n-1-k},\qquad
\beta_k=\binom{2n-1}{k}q^kp^{2n-1-k}.
\]
Reversing the votes of all DReps other than $j$ in the
payment from rejection gives DRep $j$'s conditional expected
payment advantage from acceptance in state $G$, divided
by $B$:
\[
H_j=\sum_{k=n-1}^{2n-1}\frac{\alpha_k r-\beta_k s}{k+1+\rho}
    +\sum_{k=n}^{2n-1}\lambda_k(\rho)\frac{\alpha_k s-\beta_k r}{k+1}.
\]
We show that $H_j>0$ in both deviation directions.

\medskip
\noindent\emph{Downward deviations: $0<\rho\leq1$.}
Here $\lambda_k(\rho)=1$ for every $k=n,\ldots,2n-1$. Since $r/s\leq p/q$,
all numerators are nonnegative except possibly the first-sum term
$k=n-1$. Combine it with the $k=n$ terms in both sums, and put
$C_n=\binom{2n-1}{n}=\binom{2n-1}{n-1}$. Their sum is
\begin{align*}
&C_n(pq)^{n-1}
\left[\frac{x-z}{n+1}-\frac{x-z}{n+\rho}
                         +\frac{x+z}{n+1+\rho}\right]\\
&\qquad=C_n(pq)^{n-1}x
\left[\frac{1+\rho}{n+1+\rho}
      -\frac{(1-\rho)^2}{(n+1)(n+\rho)}\right]>0.
\end{align*}
The inequality uses $(n+1)(n+\rho)\geq n+1+\rho$ and
$1+\rho>(1-\rho)^2$. The remaining terms are nonnegative, so $H_j>0$.

\medskip
\noindent\emph{Upward deviations: $\rho>1$.}
Every first-sum numerator $\alpha_k r-\beta_k s$ is nonnegative
for $k\geq n-1$, and strictly positive for $k\geq n$.
For each $k$ with $\lambda_k(\rho)>0$, pair its second-sum term with the first-sum
term having the same index. Put
\[
d=2k-2n+1\geq1,\qquad
u_k=\frac{\alpha_k-\beta_k}{\alpha_k+\beta_k}.
\]
Since $\alpha_k/\beta_k=(p/q)^d$, Bernoulli's inequality gives
\[
\left(\frac{p}{q}\right)^d
=\left(1+\frac{4x}{1-2x}\right)^d
\geq1+\frac{4dx}{1-2x},\qquad
u_k\geq\frac{2dx}{1+2(d-1)x}.
\]
The condition $\lambda_k(\rho)>0$ implies $z\leq(d+1)x$. Since $z\leq1/2$,
\[
2dx-z[1+2(d-1)x]
=2x[d-(d-1)z]-z
\geq(d+1)x-z\geq0.
\]
Consequently, $u_k\geq z$. Giving the second term full weight yields
\begin{align*}
&\frac{\alpha_k s-\beta_k r}{k+1}
 +\frac{\alpha_k r-\beta_k s}{k+1+\rho}\\
&\quad=\frac{\alpha_k+\beta_k}{(k+1)(k+1+\rho)}
 \left[\left(k+1+\frac\rho2\right)u_k-\rho z\right]>0,
\end{align*}
because $k+1>\rho/2$ and the expression in brackets is at least
$z(k+1-\rho/2)>0$. At a tie, the pair has the form
$A+\tfrac12 C=\tfrac12 A+\tfrac12(A+C)>0$, where $A$ is its
strictly positive first-sum term and $A+C$ is the full pair just
bounded. Thus the half-weight term required by fair tie-breaking
preserves positivity. All unpaired first-sum terms are nonnegative.
If every $\lambda_k$ is zero, the first sum itself contains strictly
positive terms. Hence $H_j>0$ also after upward deviations.

\medskip
Reversing the states changes the conditional payment advantage to
$-BH_j$. Conditional independence of signals therefore implies that,
at posterior $\hat\mu_j$, the expected payment advantage from acceptance is
\[
B(2\hat\mu_j-1)H_j,
\]
Since $x,z>|\mu-1/2|$, every posterior generated by a signal of
positive probability differs from $1/2$. The sincere vote is therefore
the unique best response.

Finally, consider $x=1/2$ directly. DRep $j$ and the other
$2n-1$ DReps with effort $x$ observe the state perfectly. Put
$\rho=2z\in(0,1]$, $r=1/2+z$, and $s=1/2-z$.
Conditional on state $G$, if $j$ accepts, she receives
$B/(2n+\rho)$ when $i$ accepts and $B/(2n)$ when $i$ rejects.
Both coalitions win by weight and contain at least $n+1$ members.
If $j$ rejects, she can be paid only when $n=1$ and $i$ also
rejects: that two-member coalition wins by weight and gives her
$B/(1+\rho)$. For $n\geq2$, her rejecting coalition has at most
two members and fails the cardinality requirement. Consequently,
\[
H_j=\frac{r}{2n+\rho}+\frac{s}{2n}
-\mathbf{1}\{n=1\}\frac{s}{1+\rho}.
\]
For $n\geq2$, the first term is strictly positive and the second
is nonnegative. For $n=1$, direct simplification gives
\[
H_j=\frac{\rho(7+2\rho-\rho^2)}{4(1+\rho)(2+\rho)}>0
\qquad (0<\rho\leq1).
\]
State $B$ gives the opposite payment difference. Since $j$'s
posterior is either zero or one, sincerity is uniquely optimal.

The argument applies to every $j\neq i$. Together with Part~1,
it establishes strict SVIC and the claimed sincere voting-stage
equilibrium, including the maximal-effort endpoint.
\end{proof}

%------------------------------------------------------

\subsection{Proof of Proposition~\ref{prop:cpsr-interior-existence}}
\begin{proof}
Put $N=2n+1$, $R=B/a$, and $\delta=|\mu-1/2|$.
For common effort $x>0$ among the other $2n$ DReps,
write $p=1/2+x$ and $q=1/2-x$. When every DRep
votes according to her signal, the deviator's expected
gross payment divided by $B$ is
\[
g_n(z,x)
=
\sum_{k=n}^{2n}\binom{2n}{k}
\left[
(\tfrac12+z)p^kq^{2n-k}
+(\tfrac12-z)p^{2n-k}q^k
\right]\frac{z}{z+kx}.
\]
The restriction $k\geq n$ imposes the minimum coalition
size. For $z>0$, it also guarantees victory by weight,
since
\[
z+kx-(2n-k)x=z+2(k-n)x>0.
\]
If $k<n$, the deviator receives no payment even if her
coalition wins by weight. At $z=0$, her payment is zero.
The corresponding net payoff divided by $a$ is
\[
u_R(z,x):=R g_n(z,x)-z.
\]

\paragraph{Effort incentives near the cap.}
At $x=1/2$, the other DReps' signals reveal the state
perfectly. The deviator joins their coalition with
probability $1/2+z$, obtaining payment share $z/(z+n)$;
otherwise she receives zero. Hence
\[
g_n(z,1/2)=\frac{z(1/2+z)}{z+n},
\]
and differentiation gives
\[
\left.
\frac{\partial g_n(z,x)}{\partial z}
\right|_{z=x=1/2}
=
\frac1N+\frac{4n}{N^2}.
\]
Consequently,
\[
\left.
\frac{\partial u_R(z,x)}{\partial z}
\right|_{R=R_n,\ z=x=1/2}
=
\frac{N^2}{4n}
\left(\frac1N+\frac{4n}{N^2}\right)-1
=
\frac{N}{4n}>0.
\]
This derivative is continuous in $(R,z,x)$ in a feasible
neighborhood of the displayed point, since $g_n$ is a
finite sum with strictly positive denominators there.

For $R<R_n$, let
\[
x_0(R):=\frac{2nR}{N^2}
\]
denote the interior PSR effort. As $R\uparrow R_n$,
$x_0(R)\uparrow1/2$. Since $\delta<1/2$, continuity
therefore gives an $\varepsilon_n(\mu)\in(0,R_n)$
such that, for every
$R\in(R_n-\varepsilon_n(\mu),R_n)$,
\[
\delta<x_0(R)<\frac12,
\qquad
\left.
\frac{\partial u_R(z,x)}{\partial z}
\right|_{z=x=x_0(R)}>0.
\]

Fix such an $R$ and write $x_0=x_0(R)$.
For $x>x_0$, define
\[
\tau(x):=Nx_0-2nx.
\]
As $x\downarrow x_0$, both $x$ and $\tau(x)$
converge to $x_0$, and $\tau(x)<x$ whenever $x>x_0$.
The derivative remains positive near $(z,x)=(x_0,x_0)$.
We may therefore choose $x>x_0$ sufficiently close
to $x_0$ that
\[
\delta<\tau(x)<x<\frac12,
\qquad
\frac{\partial u_R(z,x)}{\partial z}>0
\quad\text{for every }z\in[\tau(x),x].
\]
Write $\tau=\tau(x)$ and take $x^*=x$.
The preceding inequality implies
\[
u_R(z,x)\leq u_R(x,x)
\qquad\text{for every }z\in[\tau,x].
\]

\paragraph{Voting continuations.}
Prescribe sincere voting at $(x,\ldots,x)$ and at
every permutation of $(z,x,\ldots,x)$ with $\tau<z<x$.
At every other effort profile, prescribe pooling
on rejection.

Since $\delta<\tau$, all DReps' sincere votes follow
signals at the prescribed sincere profiles.
Proposition~\ref{pr:CPSR-properties} and
Lemma~\ref{lem:sincere_voting_effort_deviation}
establish strict SVIC at those profiles, so these
continuations are voting-stage equilibria.
Pooling on rejection is a voting-stage equilibrium
at every other profile by
Proposition~\ref{pr:CPSR-properties}.
Thus all prescribed continuations are sequentially
rational, although the pooling continuations need
not be sincere.

The prescription is feasible given the available
information. A positive-effort DRep can infer the
effort profile from her own effort and the public
voting weights. A zero-effort DRep knows that none
of the prescribed sincere profiles was reached
and selects rejection.

\paragraph{Global effort incentives.}
At the symmetric profile, equal positive weights
ensure that the winning coalition contains at least
$n+1$ DReps. The entire budget is therefore paid.
By symmetry, each DRep's expected gross payment is
$B/N$, giving net payoff
\[
U:=\frac{B}{N}-ax.
\]

Following a deviation with $\tau<z<x$, sincere voting
is a best response under the prescribed continuation.
The deviator's maximum payoff over voting rules
therefore satisfies
\[
a u_R(z,x)\leq a u_R(x,x)=U.
\]

Following any other effort deviation, the other DReps
pool on rejection. Joining them gives net payoff
\[
P(z;x):=B\frac{z}{z+2nx}-az.
\]
Acceptance gives zero gross payment: the deviator
either loses by weight or forms a singleton winning
coalition that fails the minimum-size requirement.
Thus $P(z;x)$ is her maximum payoff over voting rules.

Since $x_0=2nB/(aN^2)$ and $\tau=Nx_0-2nx$,
\begin{align*}
P(z;x)-U
&=(z-x)\left[\frac{2nB}{N(z+2nx)}-a\right]\\
&=a\,\frac{(z-x)(\tau-z)}{z+2nx}
\leq0
\end{align*}
whenever $z\leq\tau$ or $z>x$.
Together with the sincere deviation region, these
cases exhaust all unilateral effort deviations.
A voting-only deviation is unprofitable by strict
SVIC at the symmetric profile. Hence no joint effort
and voting deviation is profitable, establishing ER.

\paragraph{Budget, participation, and symmetry.}
CPSR distributes $B$ when the weighted winning
coalition has at least $n+1$ members, and zero
otherwise, so BC holds. Moreover, $\tau>0$ implies
\[
x<\frac{Nx_0}{2n}=\frac{B}{aN},
\]
so $U>0$ and PC holds.

The effort choices and continuation prescription
are invariant under relabeling DReps, so the
equilibrium is symmetric. Finally,
\[
\delta<x_0=x^{psr}<x=x^*<\frac12,
\]
which proves the stated effort comparison and
ensures sincere, signal-contingent voting on
the equilibrium path.
\end{proof}

%------------------------------------------------------

\subsection{Proof of Corollary~\ref{cor:cpsr-higher-success}}
\label{app:cpsr-success}
\begin{proof}
Proposition~\ref{prop:cpsr-interior-existence} establishes
$|\mu-1/2|<x^{psr}<x^*<1/2$. Thus sincere votes follow signals
at both symmetric effort profiles. Equal positive weights implement
simple majority, and conditional on either state success probability is
\[
F_n(x)=\sum_{k=n+1}^{2n+1}\binom{2n+1}{k}
\left(\frac12+x\right)^k
\left(\frac12-x\right)^{2n+1-k}.
\]
Unconditional success probability is therefore $F_n(x)$ for every
prior in this SC region. Differentiating the
binomial tail and cancelling adjacent terms gives
\[
F_n'(x)=(2n+1)\binom{2n}{n}
\left(\frac14-x^2\right)^n>0
\qquad (0<x<1/2).
\]
Hence $F_n(x^*)>F_n(x^{psr})$. The conclusion applies to the
constructed sincere CPSR equilibrium and the sincere PSR equilibrium
at the same parameters.
\end{proof}

%% file: sections/appendix-results-examples-short.tex
\section{Additional Results and Examples}
\label{appendix:results-examples}
%-------------------------------------------------------
%------------------------------------------------------
\subsection{Additional analytical results}\label{app:additional-analytical-results}

%------------------------------------------------------
\subsubsection{Asymptotic accuracy under PSR}
\label{app:psr-accuracy}

\begin{corollary}[PSR asymptotically attains the accuracy bound]
\label{cor:psr-asymptotic-accuracy}
Let $\mu=1/2$ and fix $B,a>0$. At the sincere PSR equilibrium,
let
\[
x^{psr}
=
\min\left\{
\frac12,\frac{2nB}{a(2n+1)^2}
\right\},
\qquad
\boldsymbol{x}^{psr}
=
(x^{psr},\ldots,x^{psr}).
\]
Then
\[
\lim_{n\to\infty}
\frac{
\Pr(\mathrm{success};v^{sv},\boldsymbol{x}^{psr})-\frac12
}{
(B/a)\binom{2n}{n}/4^n
}
=1.
\]
\end{corollary}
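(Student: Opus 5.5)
The plan is to compute the sincere PSR success probability exactly and then match its first-order behaviour with the normalizing quantity in the denominator. At $\mu=1/2$ and any $x^{psr}\in(0,1/2]$ we have $|\mu-1/2|=0<x^{psr}$, so sincere votes follow signals. Equal positive weights implement simple majority. As in the proof of Corollary~\ref{cor:cpsr-higher-success}, the success probability equals $F_n(x^{psr})$ in either state. The proof of Theorem~\ref{pr:accuracy} supplies the integral identity
\[
F_n(x)-\tfrac12=(2n+1)\binom{2n}{n}\int_0^{x}\left(\tfrac14-t^2\right)^n dt .
\]
This turns the numerator into an explicit integral.

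Next, fix $B,a$. For $n$ large, $2nB/[a(2n+1)^2]<1/2$, so eventually $x^{psr}=2nB/[a(2n+1)^2]$, and $(2n+1)x^{psr}=\frac{2n}{2n+1}\frac{B}{a}\to B/a$. Substitute $t=x^{psr}u$ with $u\in[0,1]$. This writes the ratio in the statement as
\[
\frac{(2n+1)x^{psr}}{B/a}\int_0^1\left(1-4(x^{psr})^2u^2\right)^n du .
\]

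The remaining step, and the only delicate one, is to show that the integral tends to $1$. The upper bound by $1$ is immediate. For the lower bound I would use Bernoulli's inequality, $(1-y)^n\ge 1-ny$ for $y\in[0,1]$, with $y=4(x^{psr})^2u^2\le 4(x^{psr})^2$. This gives a lower bound of at least $1-4n(x^{psr})^2$. Since $x^{psr}=O(1/n)$, the term $n(x^{psr})^2=O(1/n)\to0$, so the integral converges to $1$.

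Multiplying the two limits yields $1$. No use of $\binom{2n}{n}/4^n\sim1/\sqrt{\pi n}$ is needed, because that factor cancels exactly between the numerator and the denominator.
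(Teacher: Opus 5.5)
Your proposal is correct and follows the paper's proof essentially step by step. You use the integral identity from the proof of Theorem~\ref{pr:accuracy}, note that the cap eventually stops binding so that $(2n+1)x^{psr}=\frac{2n}{2n+1}\frac{B}{a}$, and squeeze the integrand between $4^{-n}(1-4(x^{psr})^2)^n$ and $4^{-n}$ using Bernoulli's inequality and $n(x^{psr})^2\to0$. Your substitution $t=x^{psr}u$ only repackages the paper's pointwise bounds, and, as in the paper, the factor $\binom{2n}{n}/4^n$ cancels exactly, so no Stirling-type asymptotics are needed.
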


%-----------------------------------------------------
\begin{proof}
Sincere votes follow the signals and equal weights implement
simple majority. Success therefore equals the binomial tail
in \eqref{pr:tail-integral}.
For all sufficiently large $n$, the effort cap does not bind and
$x^{psr}=2nB/[a(2n+1)^2]$.
On the interval $[0,x^{psr}]$, the integrand in
\eqref{pr:tail-integral} is between
$4^{-n}(1-4(x^{psr})^2)^n$ and $4^{-n}$.
Consequently the ratio in the statement lies between
\[
\frac{2n}{2n+1}\bigl(1-4(x^{psr})^2\bigr)^n
\quad\text{and}\quad
\frac{2n}{2n+1}.
\]
Since $n(x^{psr})^2\to0$, Bernoulli's inequality gives
$1\geq(1-4(x^{psr})^2)^n\geq1-4n(x^{psr})^2\to1$.
Both bounds converge to one.
\end{proof}
%-----------------------------

The numerator is the accuracy advantage of sincere PSR over
random choice. The denominator is the upper bound on that
advantage for symmetric equilibria under any reward rule
satisfying the budget and participation constraints. Their
ratio converges to one. Thus, with the same budget and effort
costs, any improvement on sincere PSR's accuracy becomes small
compared with PSR's own advantage over random choice.

Since
\[
\frac{\binom{2n}{n}}{4^n}
\sim \frac{1}{\sqrt{\pi n}},
\]
the corollary also gives
\[
\Pr(\mathrm{success};v^{sv},\boldsymbol{x}^{psr})
=
\frac12+\frac{B/a}{\sqrt{\pi n}}+o(n^{-1/2}).
\]
Hence sincere PSR approaches the accuracy bound while its
success probability converges to $1/2$. This result does not
rule out higher effort and accuracy under CPSR in finite
committees. It is specific to the balanced prior and does
not establish the same comparison with the bound when
$\mu\neq1/2$.

%------------------------------------------------------
\subsubsection{Maximal effort}
\label{app:maximal-effort}

Once $B/a$ reaches the PSR cap threshold, PSR already
supports maximal effort. CPSR can support the same effort
together with sincere voting for every interior prior.

\begin{proposition}[A symmetric sincere equilibrium with maximal effort]
\label{prop:maximal-effort}
Suppose $n\geq1$, $B>0$, and $a>0$. If
\[
\frac{B}{a}\geq\frac{(2n+1)^2}{4n},
\]
then, for every $\mu\in(0,1)$, CPSR admits a symmetric
equilibrium satisfying ER, strict SVIC, BC, and PC,
in which every DRep chooses $x^{**}=x^{psr}=1/2$
and votes sincerely on the equilibrium path.
All voting continuations are sequentially rational.
\end{proposition}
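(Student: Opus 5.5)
The construction follows the continuation structure of Theorem~\ref{thm:cpsr-benchmark-sincere}. At the symmetric profile $\bar{\mathbf x}=(1/2,\ldots,1/2)$ every DRep votes sincerely; after every other effort profile every DRep pools on rejection.

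\emph{Sequential rationality and strict SVIC.} These follow from Proposition~\ref{pr:CPSR-properties}. Since $0<x=1/2$, it gives strict SVIC at $\bar{\mathbf x}$ for every interior prior. Because $|\mu-1/2|<1/2$, sincere voting is signal-contingent there, and the posteriors are $0$ or $1$. The same proposition makes pooling on rejection a voting-stage equilibrium at every other profile. This includes zero-effort profiles, at which a DRep cannot infer the profile but knows that $\bar{\mathbf x}$ was not reached. BC holds by part~1 of that proposition.

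\emph{On-path payoff.} At $\bar{\mathbf x}$ all signals reveal the state, so all DReps vote identically. The unanimous coalition of $2n+1$ members is eligible and shares $B$ equally. Each DRep therefore receives $B/(2n+1)$, and her net payoff is $U=B/(2n+1)-a/2$. Decisions are perfectly accurate.

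\emph{PC.} Since $B/a\geq(2n+1)^2/(4n)$, we have
\[
a/2\leq 2nB/(2n+1)^2,
\]
so $U\geq B/(2n+1)^2>0$.

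\emph{ER.} Consider a unilateral effort deviation to $z\neq1/2$, so $z\in[0,1/2)$. After it, the others pool on rejection. Accepting yields zero: the deviator either loses by weight or forms a singleton that fails the cardinality requirement. Rejecting yields the best available payoff
\[
P(z;1/2)=B\frac{z}{z+n}-az .
\]
This function is strictly concave, and
\[
\partial P/\partial z=\frac{nB}{(z+n)^2}-a .
\]
At $z=1/2$ the derivative equals $4nB/(2n+1)^2-a$, which is nonnegative by the hypothesis. Concavity then gives $P(z;1/2)\le P(1/2;1/2)=U$ for all $z\in[0,1/2]$. This is exactly the calculation in the proof of Theorem~\ref{thm:cpsr-benchmark-sincere} with $x_0=1/2$, which the hypothesis makes the capped PSR effort. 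A voting-only deviation at $\bar{\mathbf x}$ is unprofitable by strict SVIC. Hence no joint effort-and-voting deviation is profitable, and ER holds. Symmetry follows because the prescription is invariant under relabeling.

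\emph{Main check.} The only point needing care is that the boundary case of Proposition~\ref{pr:CPSR-properties} really covers $x=1/2$. There $p=1$, and the function $H(1)=1/(2n+1)>0$ is as stated in that proof. I would verify directly that, with the others' votes equal to the state, joining the state gives $B/(2n+1)$ while dissenting gives $0$. This establishes strict sincerity for both signals.
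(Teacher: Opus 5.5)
Your proposal is correct and takes essentially the paper's route: the paper observes that the hypothesis gives $x^{psr}=1/2$ and invokes Theorem~\ref{thm:cpsr-benchmark-sincere} directly, whereas you re-run that theorem's continuation construction and concavity argument at $x_0=1/2$. Your closing direct check, that joining the perfectly informed unanimous vote pays $B/(2n+1)$ while dissenting pays zero, is the same verification the paper adds for strict SVIC and perfect accuracy at this profile.
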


At maximal effort, signals reveal the state perfectly
and sincere voting is signal-contingent for every interior
prior. Every DRep accepts in state $G$ and rejects in
state $B$, so success probability is one. CPSR satisfies
strict SVIC at this equilibrium profile: following the
others' sincere votes gives payment $B/(2n+1)$, whereas
dissenting gives zero. Under PSR, sincere voting at the
same effort is optimal through indifference. Thus both
rules support maximal effort, while their voting
incentives differ.

\begin{proof}
At the stated reward-cost ratios, $x^{psr}=1/2$.
Theorem~\ref{thm:cpsr-benchmark-sincere} therefore supplies
the claimed equilibrium for every interior prior.
It satisfies ER, strict SVIC, BC, and PC, and all voting
continuations are sequentially rational. At this effort
every signal reveals the state, so all DReps vote correctly.
Following their sincere votes gives payment $B/(2n+1)>0$,
whereas dissenting gives zero. Thus strict SVIC holds
at the equilibrium profile and success probability is one.
\end{proof}

%------------------------------------------------------
%------------------------------------------------------

\subsection{Examples}\label{app:examples}

%-----------------------------------------------------
\subsubsection{The role of the cardinality requirement}

Voting weight determines which alternative wins, whereas
the cardinality requirement determines whether the winning
coalition receives a payment. The following example shows
why these conditions are distinct.

\begin{example}[The role of the cardinality requirement]
\label{ex:cardinality-requirement}
Let $n=1$, so the committee contains three DReps. Suppose
DReps 2 and 3 each exert effort $2/9$ and vote to reject.
If DRep 1 chooses
\[
x_1=\frac49+\varepsilon,
\qquad
0<\varepsilon<\frac1{18},
\]
her effort remains below $1/2$ and exceeds the combined
effort of the other two DReps. Her voting weight is therefore
\[
w_1
=
\frac{\frac49+\varepsilon}
     {\frac89+\varepsilon}
>
\frac12.
\]
By voting to accept, she makes acceptance win alone,
regardless of the other DReps' votes.

Without the cardinality requirement, proportional sharing
within the winning coalition would give her
\[
B\frac{x_1}{x_1}=B.
\]
Under CPSR, however, payment requires a winning coalition
with at least $n+1=2$ members. Since the accepting coalition
contains only DRep 1, she receives zero.
\end{example}

The cardinality requirement thus prevents a DRep from
capturing the entire budget through a singleton victory,
without restricting her ability to determine the voting
outcome. The example isolates this feature of the payment
rule; it does not assert that the effort choice is a
profitable deviation or an equilibrium.

%-------------------------------------------------------------
\subsubsection{Voting incentives across priors and effort profiles} \label{app:cpsr-asymmetric}
The following plot illustrates voting incentives at the
effort profile $x=x^{psr}$.

\input{figures/cpsr_prior_regions_n1}

Strict SVIC at symmetric positive profiles does not extend
to every asymmetric profile. The following example shows
that even ordinary SVIC can fail. We present the example first, followed by its supporting
calculations.

\begin{example}[Strict SVIC depends on the prior and effort profile]
\label{pr:SVIC-example}
Let $n=1$ and $\mathbf{x}=(0.3,0.2,0.2)$. At $\mu=1/4$,
DReps~2 and~3 sincerely reject after either signal. After
a good signal, DRep~1 has posterior $4/7>1/2$, so her
sincere action is acceptance. That action yields zero
payment, whereas joining rejection yields $3B/7$.
Thus CPSR fails SVIC at this profile. At the same effort
profile, strict SVIC holds for $0.3<\mu<0.7$, when all
sincere votes follow signals. These are voting-stage
comparisons; no effort-equilibrium claim is made for
this profile.
\end{example}

\paragraph{Supporting calculations.}
Fix $\mathbf{x}=(0.3,0.2,0.2)$. Every pair outweighs
the remaining singleton and contains the two members
required for payment. When $0.3<\mu<0.7$, all sincere
votes follow signals. In state $G$, the expected-payment
advantage of acceptance for DRep~1 is
\[
(0.7^2-0.3^2)\frac{3B}{7}=\frac{6B}{35}.
\]
When the other DReps disagree, she receives the same
share by joining either one, so those configurations
cancel. For DRep~2 or~3, the advantage is
\[
(0.8\cdot0.7-0.2\cdot0.3)\frac{2B}{7}
+(0.8\cdot0.3-0.2\cdot0.7)
\left(\frac{2B}{5}-\frac{B}{2}\right)
=\frac{93B}{700}.
\]
The first term covers agreement by the other DReps;
the second covers disagreement and the different shares
obtained by joining each one. Reversing the state negates
these advantages. Hence
\[
U_i^A(s)-U_i^R(s)
=
B(2\hat\mu_{i,s}-1)
\begin{cases}
6/35,&i=1,\\
93/700,&i=2,3,
\end{cases}
\]
which proves strict SVIC in the stated prior interval.

At $\mu=1/4$, the sincere votes of DReps~2 and~3 are
always rejection, whereas
\[
\hat\mu_{1,g}
=
\frac{(1/4)(0.8)}
     {(1/4)(0.8)+(3/4)(0.2)}
=
\frac47>\frac12.
\]
Sincere acceptance gives DRep~1 zero: her singleton
loses by weight and also fails the cardinality requirement.
Rejecting instead gives $3B/7>0$, proving failure of SVIC.

More generally, away from posterior ties, strict SVIC
holds for $0<\mu<0.2$ and $0.8<\mu<1$, where all sincere
votes pool, and for $0.3<\mu<0.7$, as proved above.
SVIC fails for $0.2<\mu<0.3$ and $0.7<\mu<0.8$:
DRep~1 sometimes sincerely opposes the fixed votes of
DReps~2 and~3 and benefits from joining them.
No claim about ER at this effort profile is required.

%---------------------------------------------------------
\subsubsection{Interior effort comparisons}
\label{app:interior-effort-comparisons}
The following example compares effort under CPSR and PSR conditional on sincere, signal-contingent voting. It shows that CPSR can support higher effort, but that the comparison can reverse as committee size increases. Supporting
calculations follow the example.

\begin{example}[Interior effort comparisons]
\label{ex:interior_SC_equilibria}
Consider $n=1$, $\mu=1/2$, and positive $B,a$ satisfying
\[
\frac{72}{67}<\frac{B}{a}<\frac{72}{59}.
\]
We examine whether conditioning rewards on the winning coalition
can support greater individual effort than PSR when every DRep
votes according to her signal.

There is a unique symmetric effort candidate
$x^*\in(1/4,1/3)$ at which no DRep can improve her payoff
by changing her effort, holding the other DReps' efforts
fixed and maintaining signal-following voting. Moreover,
\[
x^{psr}=\frac{2B}{9a}<x^*<\frac{B}{3a}.
\]
Thus, in this case, the conditional effort incentives support
higher effort than PSR.

For $B=1.2$, $a=1$, and $\mu=1/2$, the table and plot extend
the comparison across committee sizes. Each reported $x^*$
is individually optimal against the other DReps exerting
$x^*$, conditional on sincere, signal-contingent voting.

\begin{figure}[H]
\centering
\begin{minipage}[c]{.39\linewidth}
\centering\small
\setlength{\tabcolsep}{3pt}
\begin{tabular}{rrr}
\toprule
$n$ & $x^{psr}$ & $x^*$\\
\midrule
1    & .2667    & .3174\\
2    & .1920    & .2069\\
3    & .1469    & .1528\\
10   & .0544    & .0543\\
20   & .0286    & .0284\\
50   & .01176   & .01171\\
1000 & .0005994 & .0005991\\
\bottomrule
\end{tabular}
\end{minipage}\hfill
\begin{minipage}[c]{.59\linewidth}
\input{figures/example3_plot.tex}
\end{minipage}
\caption{Conditional effort comparison at $\mu=1/2$,
$B=1.2$, and $a=1$. The plot shows $100(x^*/x^{psr}-1)$.
Calculations cover $n=1,\ldots,50$ and $n=1000$; the dashed
connector does not represent additional computed observations.}
\label{fig:cpsr-example3-expanded}
\end{figure}
\end{example}

The effort comparison reflects the interaction between the
eligibility and share margins. Under sincere, signal-contingent
voting, greater signal accuracy increases the probability of
joining a payment-eligible winning coalition. However, sharing
rewards only within that coalition gives a smaller marginal
return from increasing one's payment share than PSR does at
the same symmetric effort. The eligibility incentive dominates
for the smaller committees in the example. For the larger
reported committees, it is too weak to offset the weaker share
incentive, and the computed effort falls slightly below PSR.
For very large committees with weakly informative signals,
a DRep is paid approximately half the time and shares the
reward with approximately half the committee. The marginal
payment incentive is then close to that under PSR, explaining
why the relative effort gap narrows again.

These effort comparisons are conditional on sincere,
signal-contingent voting. When $\mu\neq1/2$, a sufficiently
large effort reduction makes the deviator's sincere vote
signal-independent; Example~\ref{ex:SI_deviations} examines
these deviations while holding the other DReps' sincere
strategies fixed. Example~\ref{ex:cpsr-sincere-equilibria}
combines credible sincere and pooling continuations to
establish full equilibria over suitable prior ranges.

\paragraph{Supporting calculations: conditional payoff.}
We first give the payoff formula for an arbitrary committee
size $2n+1$, with $n\geq1$. Let DRep $i$ choose effort
$z\in[0,1/2]$, while the other $2n$ DReps choose common
effort $x\in(0,1/2]$. Put $p=1/2+x$ and $q=1/2-x$,
and suppose every DRep votes according to her signal.

Let $k$ count the other DReps casting the same vote as
DRep $i$. Payment requires $k\geq n$. For $z>0$, this
also guarantees victory by voting weight, since
\[
z+kx>(2n-k)x.
\]
Her payment is then $Bz/(z+kx)$. If $k<n$, a weighted
victory alone is insufficient for payment. Consequently,
\eqref{eq:conditional_expected_payoff} simplifies to
\begin{equation}
\label{eq:cpsr-sc-payoff}
\pi_i^{SC}(z,x)
=
B\sum_{k=n}^{2n}\binom{2n}{k}
\left[
(\tfrac12+z)p^kq^{2n-k}
+(\tfrac12-z)p^{2n-k}q^k
\right]\frac{z}{z+kx}-az.
\end{equation}
The two terms in brackets correspond to DRep $i$ receiving
a correct or an incorrect signal, respectively. Signal
accuracy and the payment rule are symmetric across states,
so this expected payoff is the same in either state and
does not depend on the prior. At $z=0$, her payment and
effort cost are both zero.

The formula evaluates signal-following behavior. When
$|\mu-1/2|<\min\{z,x\}$, that behavior is sincere and
Lemma~\ref{lem:sincere_voting_effort_deviation} establishes
a voting-stage equilibrium. Outside that region, the
formula remains a conditional payoff calculation, without
asserting that the prescribed votes are sincere or optimal.

\paragraph{The three-DRep calculation.}
\label{app:cpsr-n1}
Now let $n=1$, $R=B/a$, and $72/67<R<72/59$.

When the other two DReps' votes differ, either vote by
DRep $i$ joins a pair, giving payment $Bz/(z+x)$ for
$z>0$. This event has probability $2pq$ in either state.
All three votes agree with probability
$(1/2+z)p^2+(1/2-z)q^2$, giving payment $Bz/(z+2x)$.
Every other configuration gives the deviator zero payment.
These expressions also give zero payment at $z=0$.
Thus,
\[
\pi_i^{SC}(z,x)
=
B\left[
2pq\frac{z}{z+x}
+\left((\tfrac12+z)p^2+(\tfrac12-z)q^2\right)
\frac{z}{z+2x}
\right]-az.
\]
Substituting $p=1/2+x$ and $q=1/2-x$ and simplifying gives
\begin{equation}
\label{eq:cpsr-n1-payoff}
\pi_i^{SC}(z,x)
=
B\left[
2xz
+\left(\frac12-2x^2\right)\frac{z}{z+x}
+\left(\frac14-3x^2\right)\frac{z}{z+2x}
\right]-az.
\end{equation}

\paragraph{Symmetric effort candidate.}
Differentiating in $z$ and then setting $z=x$ gives
\[
\left.
\frac{\partial\pi_i^{SC}(z,x)}{\partial z}
\right|_{z=x}
=
B\left(\frac{5x}{6}+\frac{13}{72x}\right)-a.
\]
A symmetric interior candidate therefore satisfies
\begin{equation}
\label{eq:cpsr-n1-root}
B\left(\frac{5x^*}{6}+\frac{13}{72x^*}\right)=a.
\end{equation}
The bracket is strictly decreasing on $[1/4,1/3]$,
since its derivative is
\[
\frac56-\frac{13}{72x^2}<0.
\]
Its endpoint values are $67/72$ and $59/72$, so the
assumed interval for $R$ yields exactly one root
$x^*\in(1/4,1/3)$.

Multiplying \eqref{eq:cpsr-n1-root} by $x^*/B$ gives
\[
\frac{ax^*}{B}
=
\frac{13}{72}+\frac{5(x^*)^2}{6}
\in
\left(\frac{67}{288},\frac{59}{216}\right)
\subset
\left(\frac29,\frac13\right).
\]
Hence
\[
x^{psr}=\frac{2B}{9a}<x^*<\frac{B}{3a}.
\]

\paragraph{Global optimality under signal-following voting.}
The second derivative of \eqref{eq:cpsr-n1-payoff},
holding the other DReps' effort at $x^*$, is
\[
\frac{\partial^2\pi_i^{SC}(z,x^*)}{\partial z^2}
=
-2Bx^*\left[
\frac{1/2-2(x^*)^2}{(z+x^*)^3}
+\frac{1/2-6(x^*)^2}{(z+2x^*)^3}
\right].
\]
The first numerator in brackets is positive. If the second
is nonnegative, the bracket is positive. Otherwise,
replacing its denominator by $(z+x^*)^3$ gives a strict
lower bound on the bracket:
\[
\frac{1-8(x^*)^2}{(z+x^*)^3}>0,
\]
since $x^*<1/3$. Thus the payoff is strictly concave
throughout $[0,1/2]$. Together with
\eqref{eq:cpsr-n1-root}, this establishes that $x^*$
is the unique global maximizer:
\[
x^*\in\arg\max_{z\in[0,1/2]}\pi_i^{SC}(z,x^*).
\]
This verifies effort optimality conditional on
signal-following voting.

\paragraph{Numerical support for $n=3$.}
\label{app:cpsr-example3-n3}
We use $n=3$, $B=6/5$, and $a=1$ to illustrate the numerical
procedure for the other committee sizes. Calculations use
unrounded values; displayed numerical approximations are rounded.
Let the six other DReps choose effort $x>0$, and write
$p=1/2+x$ and $q=1/2-x$. The deviator chooses $z\in[0,1/2]$.

Conditional on sincere, signal-contingent voting, her
probability of voting correctly is $1/2+z$. If $k$ other
DReps cast the same vote, her coalition has $k+1$ members.
Payment requires $k\geq3$. For $z>0$, this also guarantees
victory by weight, since
\[
z+kx-(6-k)x=z+2(k-3)x>0.
\]
If $k<3$, winning by weight is insufficient for payment.
At $z=0$, the deviator receives zero payment.

Define
\[
A_k:=\binom6k p^kq^{6-k},\qquad
C_k:=\binom6k p^{6-k}q^k,\qquad k=3,4,5,6.
\]
These are the probabilities of $k$ other DReps sharing
the deviator's vote conditional on her vote being correct
or incorrect, respectively. Thus the four payment-event
probabilities are
\begin{align*}
w_3(z)&=20p^3q^3,\\
w_4(z)&=15\left[
  \left(\tfrac12+z\right)p^4q^2
  +\left(\tfrac12-z\right)p^2q^4
\right],\\
w_5(z)&=6\left[
  \left(\tfrac12+z\right)p^5q
  +\left(\tfrac12-z\right)pq^5
\right],\\
w_6(z)&=\left(\tfrac12+z\right)p^6
       +\left(\tfrac12-z\right)q^6.
\end{align*}
The conditional expected net payoff is therefore
\begin{equation}
\pi_i^{SC}(z,x)
=B\sum_{k=3}^{6}w_k(z)\frac{z}{z+kx}-az.
\label{eq:cpsr-n3-sc-payoff}
\end{equation}
The expression is independent of the prior because the
conditional probability of a correct signal is the same
in both states.

Differentiating with respect to the deviator's effort,
holding $x$ fixed, gives
\begin{equation}
\frac{\partial\pi_i^{SC}(z,x)}{\partial z}
=
B\sum_{k=3}^{6}\left[
(A_k-C_k)\frac{z}{z+kx}
+w_k(z)\frac{kx}{(z+kx)^2}
\right]-a.
\label{eq:cpsr-n3-sc-derivative}
\end{equation}
Only after this differentiation do we impose symmetry,
$z=x$. Since $w_k(x)=pA_k+qC_k$, the symmetric
first-order condition is
\[
B\left[
\sum_{k=3}^{6}\frac{A_k-C_k}{k+1}
+\frac1x\sum_{k=3}^{6}\frac{k(pA_k+qC_k)}{(k+1)^2}
\right]=a.
\]
Expanding $p=1/2+x$ and $q=1/2-x$ reduces it to
\begin{equation}
B\left(
\frac{20891}{188160x}
+\frac{1691}{2240}x
-\frac{821}{336}x^3
+\frac{1259}{420}x^5
\right)=a.
\label{eq:cpsr-n3-symmetric-foc}
\end{equation}
With $B=6/5$ and $a=1$, solve the scalar equation
\[
\frac65\left(
\frac{20891}{188160}
+\frac{1691}{2240}x^2
-\frac{821}{336}x^4
+\frac{1259}{420}x^6
\right)-x=0.
\]
The reported interior solution is
\[
x^*\simeq0.15284,\qquad
x^{psr}=\frac{6B}{49a}=\frac{36}{245}\simeq0.14694.
\]
For example, bisection of the symmetric first-order
expression brackets $x^*$ in
\[
I=[0.15284,0.15285].
\]
The first-order expression is positive at the lower endpoint
and negative at the upper endpoint.

To verify global optimality, define
\[
D_k(x):=\frac{A_k+C_k}{2}-kx(A_k-C_k).
\]
A second differentiation of \eqref{eq:cpsr-n3-sc-payoff} gives
\begin{equation}
\frac{\partial^2\pi_i^{SC}(z,x)}{\partial z^2}
=-2B\sum_{k=3}^{6}\frac{kxD_k(x)}{(z+kx)^3}.
\label{eq:cpsr-n3-sc-second}
\end{equation}
At the reported root,
\[
D_3\simeq0.23283,\qquad D_4\simeq0.066629,
\]
\[
D_5\simeq-0.040292,\qquad D_6\simeq-0.029807,
\]
and
\[
S_3(x^*)
=
\sum_{k=3}^{6}kx^*D_k(x^*)
\simeq0.089367>0.
\]
The signs remain the same throughout $I$; evaluating the
displayed polynomials with interval bounds on $I$ gives
$S_3(x)>0.08928$. For every feasible $z$,
\[
\sum_{k=3}^{6}\frac{kx^*D_k(x^*)}{(z+kx^*)^3}
\geq\frac{S_3(x^*)}{(z+4x^*)^3}>0.
\]
Indeed, positive terms have $k\leq4$ and negative terms
have $k>4$. The payoff is therefore strictly concave
throughout $[0,1/2]$. Together with the first-order
condition, this proves
\[
\pi_i^{SC}(z,x^*)\leq\pi_i^{SC}(x^*,x^*)
\quad\text{for every }z\in[0,1/2].
\]
The analogous second-derivative check was performed for
every computed committee size, $n=1,\ldots,50$ and $n=1000$.
In each case, it confirms strict concavity over $[0,1/2]$
at the reported candidate, establishing global effort
optimality under the conditional signal-contingent payoff.

At the symmetric profile, one coalition always receives
the full budget: with seven equal positive weights,
the numerical majority also wins by weight. Symmetry
then gives each DRep expected payment $B/7$. Hence
\[
U=\pi_i^{SC}(x^*,x^*)
=\frac{6}{35}-x^*
\simeq0.018587>0.
\]
These calculations verify a global best response under
the conditional signal-contingent payoff. At the symmetric
candidate, sincere voting also satisfies strict SVIC by
Proposition~\ref{pr:CPSR-properties}. For $|\mu-1/2|<x^*$,
sincere votes follow signals. The continuations that sustain
a full equilibrium with strict SVIC are specified in
Example~\ref{ex:cpsr-sincere-equilibria}.

%---------------------------------------------------------
\subsubsection{Symmetric equilibria with strict SVIC}
\label{app:cpsr-completion}
The following example shows how the effort candidates in Example~\ref{ex:interior_SC_equilibria} can be sustained as full symmetric equilibria with strict SVIC over suitable prior ranges. It combines sincere and pooling voting continuations to rule out profitable effort and voting deviations. Supporting calculations follow the example.

\begin{example}[Symmetric equilibria with strict SVIC]
\label{ex:cpsr-sincere-equilibria}
Consider the candidates in
Example~\ref{ex:interior_SC_equilibria}. Suppose DRep $i$
deviates to effort $z$, while the other DReps exert $x^*$
and reject regardless of their signals. Her best response
is to join them in rejection, giving the net payoff
\[
P(z;x^*)=B\frac{z}{z+2nx^*}-az.
\]
Let
\[
U:=\frac{B}{2n+1}-ax^*
\]
denote her expected net payoff at the symmetric candidate
under sincere voting. Define
\[
x^P:=(2n+1)x^{psr}-2nx^*,
\qquad
\delta_n^E:=\min\{x^*,x^P\}.
\]
Since PSR effort is interior for these candidates,
\begin{equation}
P(z;x^*)-U
=
a\,\frac{(z-x^*)(x^P-z)}{z+2nx^*}.
\label{eq:cpsr-expanded-pooling-gap}
\end{equation}
Thus $P(z;x^*)=U$ at $z=x^*$ and $z=x^P$, and pooling
makes a deviation profitable only when $z$ lies strictly
between them. If $x^P<x^*$, such deviations reduce effort;
if $x^P>x^*$, they increase effort.

\begin{figure}[H]
\centering
\begin{minipage}[c]{.39\linewidth}
\centering\small
\setlength{\tabcolsep}{3pt}
\begin{tabular}{rrrr}
\toprule
$n$ & $x^*$ & $x^P$ & $x^P/x^*$\\
\midrule
1  & .3174 & .1652 & .520\\
2  & .2069 & .1325 & .641\\
3  & .1528 & .1115 & .730\\
10 & .0543 & .0573 & 1.055\\
20 & .0284 & .0355 & 1.250\\
50 & .0117 & .0173 & 1.475\\
\bottomrule
\end{tabular}
\end{minipage}\hfill
\begin{minipage}[c]{.59\linewidth}
\input{figures/example5_plot.tex}
\end{minipage}
\caption{Equilibrium coverage at $B=1.2$ and $a=1$.
The dashed curve shows the pooling-payoff crossing
$x^P/x^*$. The solid curve shows the fraction
$\delta_n^E/x^*=\min\{1,x^P/x^*\}$ of the
signal-contingent prior region covered by the construction.
A solid-curve value of one means full coverage: throughout
that prior region, no unilateral effort and voting deviation
is profitable under the specified continuations.}
\label{fig:cpsr-example5-expanded}
\end{figure}

The ratio $x^P/x^*$ locates the effort deviations that
would be profitable under pooling. As it increases toward
one from below, the interval of profitable effort reductions
shrinks relative to $x^*$. Above one, further increases
expand the interval of profitable effort increases.

For $|\mu-1/2|<\delta_n^E$, prescribe sincere voting at
the symmetric profile and after unilateral effort deviations
strictly between $x^*$ and $x^P$. Prescribe pooling on
rejection at every other effort profile. Thus, after effort
deviations, sincere voting is selected precisely where
pooling would make the deviation profitable.

These continuations are voting-stage equilibria by
Lemma~\ref{lem:sincere_voting_effort_deviation} and
Proposition~\ref{pr:CPSR-properties}. The global effort
comparison in Example~\ref{ex:interior_SC_equilibria}
rules out profitable deviations in the sincere region,
and \eqref{eq:cpsr-expanded-pooling-gap} rules them out
elsewhere. The construction therefore establishes full
symmetric equilibria satisfying ER, strict SVIC, BC, and PC.

The ratio $\delta_n^E/x^*$ measures the fraction of the
signal-contingent prior region covered by this construction.
The numerical example suggests that, as committee size grows,
coverage reaches $100\%$: every prior satisfying
$|\mu-1/2|<x^*$ then supports the reported effort with
strict SVIC and no profitable unilateral effort or voting
deviation under the specified continuations. The covered
fraction expands even as the prior region itself narrows.
\end{example}

\paragraph{Supporting calculations.}
For an interior PSR effort, put $N=2n+1$,
$x^P=Nx^{psr}-2nx^*$, $\ell=\min\{x^*,x^P\}$, and
$h=\max\{x^*,x^P\}$. The reported candidates satisfy
$0<\ell\leq h<1/2$. Their SC payoffs are globally maximized
at $z=x^*$, as established in the supporting calculations
for Example~\ref{ex:interior_SC_equilibria}.

Fix $|\mu-1/2|<\ell$. Prescribe sincere voting at
$(x^*,\ldots,x^*)$ and at every permutation of
$(z,x^*,\ldots,x^*)$ with $\ell<z<h$. Select pooling on
rejection at every other effort profile. Strict SVIC holds
at the on-path profile by Proposition~\ref{pr:CPSR-properties}.
Every prescribed sincere deviation profile has
$z,x^*>|\mu-1/2|$, so
Lemma~\ref{lem:sincere_voting_effort_deviation} establishes
strict SVIC there as well. Pooling is an equilibrium at
all remaining profiles. A positive-effort DRep infers the
profile from her effort and the public weights; a zero-effort
DRep knows that neither the on-path nor a sincere deviation
profile occurred. Thus the strategies are feasible and all
voting continuations are sequentially rational.

On the equilibrium path the weighted majority contains at
least $n+1$ members and receives the full budget. Symmetry
gives $U=B/N-ax^*$. For $\ell<z<h$, voting sincerely is
optimal and global SC effort optimality gives
$\pi_i^{SC}(z,x^*)\leq U$. For all other deviations,
joining the pooling coalition is optimal. Dissent gives
zero payment even if a singleton wins by weight, because
it fails the membership requirement. Direct subtraction gives
\[
P(z;x^*)-U
=
(z-x^*)\left[\frac{2nB}{N(z+2nx^*)}-a\right]
=
a\frac{(z-x^*)(x^P-z)}{z+2nx^*}
\leq0
\]
for $z\notin(\ell,h)$. This includes both crossings and
zero effort. Voting-only deviations are unprofitable by
on-path voting optimality. Hence ER holds against all joint
effort and voting deviations.

BC holds by the payment rule. Since
\[
x^P
=
\frac{2n}{a}\left(\frac{B}{N}-ax^*\right)
=
\frac{2n}{a}U>0,
\]
PC also holds. The strategies are invariant under relabeling
DReps, and strict SVIC holds on the equilibrium path.
This proves the claimed symmetric equilibria satisfying
ER, strict SVIC, BC, and PC.

\paragraph{Equilibrium completion in the three-DRep case.}
Consider the three-DRep candidate from
Example~\ref{ex:interior_SC_equilibria}, with
\[
\frac{72}{67}<\frac{B}{a}<\frac{72}{59}.
\]
The effort comparison established there gives
\[
x^{psr}=\frac{2B}{9a}<x^*<\frac{B}{3a}.
\]
Define the pooling-payoff crossing by
\[
x^P:=3x^{psr}-2x^*.
\]
The preceding inequalities imply
\[
0<x^P<x^{psr}<x^*.
\]

Let $\delta:=|\mu-1/2|<x^P$. Prescribe sincere voting
at the symmetric effort profile $(x^*,x^*,x^*)$ and
after unilateral effort deviations with $x^P<z<x^*$.
Prescribe pooling on rejection at every other effort profile.

On the equilibrium path and in the prescribed sincere
deviation region, all efforts exceed $\delta$.
Lemma~\ref{lem:sincere_voting_effort_deviation} therefore
establishes that these sincere continuations are
voting-stage equilibria. Pooling on rejection is a
voting-stage equilibrium at every other effort profile
by Proposition~\ref{pr:CPSR-properties}.

At the symmetric profile, each DRep's expected net payoff is
\[
U=\frac{B}{3}-ax^*.
\]
For a deviation with $x^P<z<x^*$, sincere voting is a best
response. The global optimality established in the preceding
example therefore rules out a profitable joint effort and
voting deviation in this region.

At every other deviated effort, joining the prescribed
pooling vote is optimal. Its net payoff is
\[
P(z;x^*)=B\frac{z}{z+2x^*}-az.
\]
Voting to accept instead gives zero gross payment:
either acceptance loses, or its singleton winning coalition
fails the minimum-size requirement. By
\eqref{eq:cpsr-expanded-pooling-gap},
\[
P(z;x^*)-U
=
a\frac{(z-x^*)(x^P-z)}{z+2x^*},
\]
which is nonpositive outside $(x^P,x^*)$.
A voting-only deviation at the symmetric profile is also
unprofitable. Thus no joint effort and voting deviation
is profitable, establishing ER.

BC holds by the payment rule, and PC follows from
$x^*<B/(3a)$, which implies $U>0$.
Proposition~\ref{pr:CPSR-properties} establishes strict
SVIC on the symmetric signal-contingent equilibrium path.
The lemma also establishes strict SVIC after each
prescribed sincere effort deviation.

Alternatively, prescribe sincere voting on the equilibrium
path and after every unilateral deviation with $z>\delta$,
and pooling on rejection elsewhere. The lemma establishes
voting optimality throughout this larger sincere region,
and global concavity rules out profitable effort deviations
there. For $z\leq\delta<x^P$, the pooling payoff is strictly
below $U$. Pooling supplies a continuation at every other
effort profile. Both prescriptions are feasible from the
inferred effort profiles; a DRep with zero effort knows
that the condition for a sincere continuation fails.

\paragraph{Numerical illustration and decision accuracy.}
At $B=1.2$ and $a=1$,
\[
x^*\simeq0.31743,\qquad
x^{psr}\simeq0.26667,\qquad
x^P\simeq0.16515.
\]
The condition $\delta<x^P$ gives the sufficient prior interval
\[
\mu\in\left(\frac12-x^P,\frac12+x^P\right),
\]
whose endpoints are approximately $0.33485$ and $0.66515$.
Since $\delta<x^P<x^{psr}<x^*$, sincere voting follows
signals at both the PSR and CPSR equilibrium profiles
throughout this range.

At common effort $x$, the probability of success in
either state is
\[
F_1(x)
=
\left(\frac12+x\right)^3
+3\left(\frac12+x\right)^2\left(\frac12-x\right)
=
\frac12+\frac32x-2x^3.
\]
Consequently,
\[
F_1(x^*)\simeq0.91217,\qquad
F_1(x^{psr})\simeq0.86207.
\]
These effort and accuracy comparisons concern the stated
equilibria and do not assert uniqueness of the full-game
equilibrium.

\paragraph{Illustration for $n=3$.}
\label{app:cpsr-example5-n3}
For the $n=3$ illustration, $B=6/5$, $a=1$, and
$x^*\simeq0.15284$. Then
\[
x^P=\frac{36}{35}-6x^*\simeq0.11152,\qquad
U=\frac{x^P}{6}\simeq0.018587.
\]
The sincere deviation interval is $(x^P,x^*)$, and the
sufficient prior interval is
\[
\mu\in\left(\frac12-x^P,\frac12+x^P\right),
\]
whose endpoints are approximately $0.38848$ and $0.61152$.

For reported candidates below $x^{psr}$, the sincere
interval instead lies above $x^*$ and $\ell=x^*$.
Every prior for which the candidate is strictly SC is
then covered by this construction. The conditional threshold
in Example~\ref{ex:SI_deviations} answers a different question,
since it holds the other DReps' sincere strategies fixed
after SI deviations.

%--------------------------------------------------------
\subsubsection{Conditional signal-independent deviations}
\label{app:cpsr-conditional-si}
The following example examines whether a DRep can gain by reducing effort until her sincere vote becomes signal-independent, holding the other DReps' efforts and sincere voting strategies fixed. It identifies the prior ranges in which such deviations are unprofitable and illustrates how these ranges vary with committee size. Supporting calculations follow the example.

\begin{example}[Conditional signal-independent deviations]
\label{ex:SI_deviations}

Retain $B=1.2$, $a=1$, and the effort levels from
Example~\ref{ex:interior_SC_equilibria}, and let
$\delta=|\mu-1/2|<x^*$. Consider a low prior
$\mu=1/2-\delta$. For $z\leq\delta$, the deviator's
sincere vote is rejection after either signal.
Suppose the remaining DReps continue to exert $x^*$
and vote sincerely, and define
\[
V_n^{SI}(\delta)
:=
\max_{z\in[0,\delta]}
\pi_i^{SI}(z,x^*;\delta).
\]
The threshold $\bar\delta_n$ solves
\[
V_n^{SI}(\bar\delta_n)
=
\frac{B}{2n+1}-ax^*.
\]
For $\delta<\bar\delta_n$, every signal-independent effort
deviation is unprofitable under these fixed voting strategies
of the other DReps. The corresponding payoff comparison for
high priors follows by symmetry, with indifference handled
according to the prescribed sincere voting rule.

\begin{figure}[htbp]
\centering
\begin{minipage}[c]{0.39\linewidth}
\centering
\small
\setlength{\tabcolsep}{2.5pt}
\renewcommand{\arraystretch}{1.12}
\begin{tabular}{rrr}
\toprule
$n$ & $\bar\delta_n$ & $\bar\delta_n/x^*$\\
\midrule
1    & 0.23651    & 0.745\\
2    & 0.16108    & 0.779\\
3    & 0.12258    & 0.802\\
10   & 0.047400   & 0.873\\
20   & 0.025753   & 0.907\\
50   & 0.011015   & 0.941\\
1000 & 0.00059134 & 0.987\\
\bottomrule
\end{tabular}
\end{minipage}\hfill
\begin{minipage}[c]{0.59\linewidth}
\centering
\input{figures/example4_plot.tex}
\end{minipage}
\caption{Signal-independent deviation thresholds conditional
on the other DReps voting sincerely. The plot shows
$\bar\delta_n/x^*$ for $n=1,\ldots,50$ and $n=1000$,
with the tabulated sizes highlighted. The dashed segment
joins the last two computed sizes, $50$ and $1000$.
These are conditional payoff thresholds, not verified
equilibrium-support bounds.}
\label{fig:cpsr-example4-expanded}
\end{figure}

Across the plotted committee sizes, the absolute threshold
$\bar\delta_n$ decreases, whereas its ratio to $x^*$
increases, reaching approximately $0.98698$ at $n=1000$.
Thus, the conditional threshold becomes smaller in absolute
terms but covers a larger fraction of the on-path
signal-contingent prior region.

These calculations do not establish that the other DReps'
sincere votes remain best responses after the
signal-independent deviation. For the deviator, Part~1 of
Lemma~\ref{lem:sincere_voting_effort_deviation} establishes
a unique sincere best response when $z>0$ and the posterior
differs from $1/2$. This does not establish strict SVIC for
all DReps at the deviated profile. Accordingly,
$\bar\delta_n$ remains a conditional effort-incentive
threshold, not a full-equilibrium robustness threshold.

\end{example}

\paragraph{Supporting calculations for $n=3$.}
\label{app:cpsr-example4-n3}

Retain $n=3$, $B=6/5$, $a=1$, and the effort $x^*$
obtained in Example~\ref{ex:interior_SC_equilibria}.
Put $p^*=1/2+x^*$ and $q^*=1/2-x^*$, and consider
a low prior $\mu=1/2-\delta$, where $0\leq\delta<x^*$.
The other DReps vote sincerely and follow their signals.
If the deviator chooses $0\leq z\leq\delta$, her sincere
vote is rejection after either signal, including rejection
at a posterior tie.

Let $K$ count other DReps voting to accept. In state $G$,
it is binomial with parameters $(6,p^*)$; in state $B$,
it is binomial with parameters $(6,q^*)$. The rejecting
deviator is paid only if $K=k\leq3$. Her coalition then
has $7-k\geq4$ members and, for $z>0$, strictly wins
by weight because
\[
z+(6-k)x^*-kx^*
=
z+(6-2k)x^*>0.
\]

The relevant probabilities are
\begin{align*}
E_0(\delta)
&=
\left(\tfrac12-\delta\right)(q^*)^6
+\left(\tfrac12+\delta\right)(p^*)^6,\\
E_1(\delta)
&=
6\left[
\left(\tfrac12-\delta\right)p^*(q^*)^5
+\left(\tfrac12+\delta\right)q^*(p^*)^5
\right],\\
E_2(\delta)
&=
15\left[
\left(\tfrac12-\delta\right)(p^*)^2(q^*)^4
+\left(\tfrac12+\delta\right)(q^*)^2(p^*)^4
\right],\\
E_3(\delta)
&=
20(p^*)^3(q^*)^3.
\end{align*}
Writing $c_k=(6-k)x^*$, the conditional expected net
payoff is
\begin{equation}
\pi_i^{SI}(z,x^*;\delta)
=
B\sum_{k=0}^{3}E_k(\delta)\frac{z}{z+c_k}-az.
\label{eq:cpsr-n3-si-payoff}
\end{equation}
For each fixed $\delta$, its first two derivatives are
\begin{align*}
\frac{\partial\pi_i^{SI}}{\partial z}
&=
B\sum_{k=0}^{3}
\frac{E_k(\delta)c_k}{(z+c_k)^2}-a,\\
\frac{\partial^2\pi_i^{SI}}{\partial z^2}
&=
-2B\sum_{k=0}^{3}
\frac{E_k(\delta)c_k}{(z+c_k)^3}<0.
\end{align*}
Thus, to calculate $V_3^{SI}(\delta)$, use $z=0$ if the
first derivative at zero is nonpositive, use $z=\delta$
if the derivative there is nonnegative, and otherwise
solve its unique zero on $(0,\delta)$.

The threshold $\bar\delta_3$ solves
$V_3^{SI}(\bar\delta_3)=U$, where $U=6/35-x^*$.
At this threshold the maximizing effort is interior,
so it can be found by solving the two equations
\begin{equation}
\begin{aligned}
\frac65\sum_{k=0}^{3}E_k(\bar\delta_3)
\frac{z^{SI}}{z^{SI}+c_k}-z^{SI}
&=
\frac{6}{35}-x^*,\\
\frac65\sum_{k=0}^{3}
\frac{E_k(\bar\delta_3)c_k}{(z^{SI}+c_k)^2}
&=1.
\end{aligned}
\label{eq:cpsr-n3-si-threshold}
\end{equation}
Equivalently, one can bisect on $\delta\in[0,x^*]$,
maximizing the strictly concave payoff at each step
using its first derivative. The resulting values are
\[
z^{SI}\simeq0.10262,\qquad
\bar\delta_3\simeq0.12258,\qquad
\frac{\bar\delta_3}{x^*}\simeq0.802.
\]
In particular, $0<z^{SI}<\bar\delta_3<x^*$, verifying
the interior case used in
\eqref{eq:cpsr-n3-si-threshold}.

For every fixed $z>0$, \eqref{eq:cpsr-n3-si-payoff}
is strictly increasing in $\delta$: the coefficients of
$\delta$ in $E_0,E_1,E_2$ are positive, while $E_3$
is constant. The feasible interval $[0,\delta]$ also
expands. Moreover,
\[
V_3^{SI}(0)=0<U,\qquad
V_3^{SI}(x^*)\simeq0.021145>U,
\]
where the value at $x^*$ denotes the endpoint extension
of the conditional formula. Continuity and strict increase
once the maximum is positive give a unique crossing of $U$.
Hence no such signal-independent deviation is profitable
when $\delta<\bar\delta_3$.

The same payoff threshold applies to high priors by
interchanging states and votes; at a posterior tie,
Lemma~\ref{lem:sincere_voting_effort_deviation} gives
indifference between voting actions. This is a conditional
payoff calculation: it holds the other DReps' sincere
strategies fixed and does not verify their voting
incentives after the signal-independent deviation.

%-------------------------------------------------------
%-------------------------------------------------------
\subsection{Technical support for the analytical derivations}

%-------------------------------------------------------
\subsubsection{Probability formulas and expected payoffs}
\label{app:probabilities-payoffs}

We provide the probability formulas and the expanded
expected-payoff expression, followed by the derivation
for CPSR.

\paragraph{Voting probabilities.}
For a given effort profile $\mathbf{x}$ and voting rule
$v_i$, define the state-contingent probability that
DRep $i$ votes to accept by
\begin{align*}
q_i^G(v_i,\mathbf{x})
&:=
v_i(\hat{\mu}_{i,g},\mathbf{x})p_i
+
v_i(\hat{\mu}_{i,b},\mathbf{x})(1-p_i),\\
q_i^B(v_i,\mathbf{x})
&:=
v_i(\hat{\mu}_{i,g},\mathbf{x})(1-p_i)
+
v_i(\hat{\mu}_{i,b},\mathbf{x})p_i.
\end{align*}

Under sincere voting, the voting rule depends only on
the posterior belief, and therefore
\[
q_i^{G,sv}
=
\begin{cases}
0,
& \mu\leq\frac12-x_i,\\[0.4em]
p_i,
& \frac12-x_i<\mu\leq\frac12+x_i,\\[0.4em]
1,
& \mu>\frac12+x_i,
\end{cases}
\qquad
q_i^{B,sv}
=
\begin{cases}
0,
& \mu\leq\frac12-x_i,\\[0.4em]
1-p_i,
& \frac12-x_i<\mu\leq\frac12+x_i,\\[0.4em]
1,
& \mu>\frac12+x_i.
\end{cases}
\]

Conditional on the state and the effort profile, votes
are independent. Thus, the probability that precisely
the DReps in $A\subseteq N$ vote to accept is
\[
\Pr(A\mid S;v,\mathbf{x})
=
\prod_{i\in A}q_i^S(v_i,\mathbf{x})
\prod_{j\notin A}
\left[1-q_j^S(v_j,\mathbf{x})\right].
\]

\paragraph{General expected payoff.}
Conditional on state $S$, DRep $i$'s expected payment is
\begin{align*}
&\mathbb{E}\!\left[
f_i(W(A;\mathbf{x}),\mathbf{x})
\mid S;v,\mathbf{x}
\right]\\
&\quad=
\sum_{\substack{A\subseteq N:\\
\omega(A;\mathbf{x})>1/2}}
\Pr(A\mid S;v,\mathbf{x})f_i(A,\mathbf{x})\\
&\qquad+
\sum_{\substack{A\subseteq N:\\
\omega(A;\mathbf{x})<1/2}}
\Pr(A\mid S;v,\mathbf{x})f_i(N\setminus A,\mathbf{x})\\
&\qquad+
\frac12
\sum_{\substack{A\subseteq N:\\
\omega(A;\mathbf{x})=1/2}}
\Pr(A\mid S;v,\mathbf{x})
\left[
f_i(A,\mathbf{x})+f_i(N\setminus A,\mathbf{x})
\right].
\end{align*}
The first two sums cover acceptance and rejection,
respectively. The last sum covers a voting-weight tie,
where each coalition is selected with probability $1/2$.

Averaging over the two states and subtracting effort
cost gives the expanded form of
\eqref{eq:expected_payoff}:
\begin{align*}
\pi_i(v,\mathbf{x};f)
={}&
\mu\sum_{A\subseteq N}
\Pr(A\mid G;v,\mathbf{x})
\mathbb{E}\!\left[
f_i(W(A;\mathbf{x}),\mathbf{x})
\right]\\
&+
(1-\mu)\sum_{A\subseteq N}
\Pr(A\mid B;v,\mathbf{x})
\mathbb{E}\!\left[
f_i(W(A;\mathbf{x}),\mathbf{x})
\right]
-c(x_i).
\end{align*}
Within each sum, the expectation over the winning coalition
accounts for the fair coin at a voting-weight tie.

\paragraph{Probability of success.}
The collective decision is correct when the proposal
is accepted in state $G$ or rejected in state $B$.
Including the tie-breaking coin, its probability is
\begin{align*}
\Pr(\text{success};v,\mathbf{x})
={}&
\mu
\sum_{\substack{A\subseteq N:\\
\omega(A;\mathbf{x})>1/2}}
\Pr(A\mid G;v,\mathbf{x})\\
&+
(1-\mu)
\sum_{\substack{A\subseteq N:\\
\omega(A;\mathbf{x})<1/2}}
\Pr(A\mid B;v,\mathbf{x})\\
&+
\frac{\mu}{2}
\sum_{\substack{A\subseteq N:\\
\omega(A;\mathbf{x})=1/2}}
\Pr(A\mid G;v,\mathbf{x})\\
&+
\frac{1-\mu}{2}
\sum_{\substack{A\subseteq N:\\
\omega(A;\mathbf{x})=1/2}}
\Pr(A\mid B;v,\mathbf{x}).
\end{align*}

\paragraph{Expected payoff under CPSR.}
We now apply \eqref{eq:conditional_payment} to the general
expected-payoff expression. Consider DRep $i$, who chooses
effort $x_i$, while all other $2n$ DReps choose common
effort $x>0$ and vote sincerely. Thus,
$\mathbf{x}_{-i}=(x,\ldots,x)$, and
$\mathbf{x}(x_i;x):=(x_i,\mathbf{x}_{-i})$.
A DRep with positive effort can infer this profile from
her own effort and the publicly observed voting weights.

Let $K$ denote the number of other DReps who vote to
accept. Conditional on state $S\in\{G,B\}$,
\[
K\mid S
\sim
\operatorname{Binomial}\left(2n,q_j^{S,sv}\right),
\]
and define
\[
P_S(k;x)
:=
\binom{2n}{k}
\left(q_j^{S,sv}\right)^k
\left(1-q_j^{S,sv}\right)^{2n-k}.
\]

Suppose first that DRep $i$ votes to accept and $k$
of the other DReps also vote to accept. The accepting
coalition has total effort $x_i+kx$, and acceptance
wins with certainty whenever
\[
x_i+kx>\frac{x_i+2nx}{2}.
\]
For DRep $i$ to receive a payment, the accepting coalition
must additionally contain at least $n+1$ DReps, which
requires $k\geq n$.

If instead DRep $i$ votes to reject, while $k$ of the
other DReps vote to accept, the rejecting coalition
has total effort $x_i+(2n-k)x$. Rejection wins with
certainty whenever
\[
kx<\frac{x_i+2nx}{2},
\]
and the rejecting coalition is payment-eligible
whenever $k\leq n$.

Define
\begin{align*}
\phi_k^A(x_i,x)
&:=
\frac{
\mathbf{1}\left\{
x_i+kx>\frac{x_i+2nx}{2}
\right\}
\mathbf{1}\{k\geq n\}
}{
x_i+kx
},\\
\phi_k^R(x_i,x)
&:=
\frac{
\mathbf{1}\left\{
kx<\frac{x_i+2nx}{2}
\right\}
\mathbf{1}\{k\leq n\}
}{
x_i+(2n-k)x
}.
\end{align*}
The expressions $\phi_k^A(x_i,x)$ and $\phi_k^R(x_i,x)$
characterize the conditions under which DRep $i$ receives
a reward for each vote. The first indicator requires
that the alternative supported by DRep $i$ forms the
winning coalition, while the second imposes the
minimum-coalition-size requirement: the winning coalition
must contain at least $n+1$ DReps.\footnote{
These expressions require no additional tie term.
When $x_i>0$ and the other DReps exert common effort
$x>0$, any coalition containing DRep $i$ and at least
$n+1$ members has strictly more voting weight than its
complement. Thus, at a voting-weight tie, DRep $i$'s
coalition necessarily fails the minimum-size requirement,
and she receives no payment regardless of the coin's
outcome.}
Thus, $Bx_i\phi_k^A(x_i,x)$ is DRep $i$'s payment
if she votes to accept, while $Bx_i\phi_k^R(x_i,x)$
is her payment if she votes to reject.
For a voting action $v\in\{0,1\}$, define
\[
\Phi_k(v;x_i,x)
:=
v\,\phi_k^A(x_i,x)
+
(1-v)\phi_k^R(x_i,x).
\]

It follows that DRep $i$'s expected payoff can be
written as
\begin{align}
\pi_i(v_i,x_i;x,\mu)
=
Bx_i
\Bigg\{
&
\mu
\Bigg[
p_i
\sum_{k=0}^{2n}
P_G(k;x)
\Phi_k\!\left(
v_i(\hat\mu_{i,g},x_i;x);
x_i,x
\right)
\nonumber\\
&\qquad+
(1-p_i)
\sum_{k=0}^{2n}
P_G(k;x)
\Phi_k\!\left(
v_i(\hat\mu_{i,b},x_i;x);
x_i,x
\right)
\Bigg]
\nonumber\\
+
&(1-\mu)
\Bigg[
(1-p_i)
\sum_{k=0}^{2n}
P_B(k;x)
\Phi_k\!\left(
v_i(\hat\mu_{i,g},x_i;x);
x_i,x
\right)
\nonumber\\
&\qquad+
p_i
\sum_{k=0}^{2n}
P_B(k;x)
\Phi_k\!\left(
v_i(\hat\mu_{i,b},x_i;x);
x_i,x
\right)
\Bigg]
\Bigg\}
-ax_i.
\label{eq:conditional_expected_payoff}
\end{align}
At $x_i=0$, DRep $i$ receives no payment and incurs
no effort cost, so her net payoff is zero. The fractions
above are evaluated only for $x_i>0$.

Expression \eqref{eq:conditional_expected_payoff} captures
how effort affects both signal precision and voting weight,
while the vote determines membership in a payment-eligible
winning coalition. To study symmetric equilibria satisfying
strict SVIC, it holds the other DReps' efforts at a common
level $x$ and their voting strategies sincere, while allowing
DRep $i$ to choose her effort and voting rule freely.
Strict SVIC requires verifying that sincere voting is
uniquely optimal whenever the posterior $\hat{\mu}\neq1/2$.

After effort deviations, the expression gives the relevant
continuation payoff only when the other DReps continue to
vote sincerely. Full equilibrium therefore also requires
verifying effort incentives under the specified, sequentially
rational continuations, which may include pooling.

%% file: figures/cpsr_prior_regions_n1.tex
\begin{figure}[H]
\centering
\begin{tikzpicture}[x=10.8cm,y=1cm]
% n=1, B=1.2, a=1:
% x^{psr}=4/15, thresholds 7/30 and 23/30.
\def\priorLow{0.2333333333}
\def\priorHigh{0.7666666667}

% Strict SVIC holds throughout the interior prior range.
\node[font=\small] at (0.5,1.3)
  {CPSR: strict SVIC throughout $\mu\in(0,1)$};

% Sincere-voting regions.
\fill[gray!35] (0,0) rectangle (\priorLow,0.85);
\fill[gray!12] (\priorLow,0) rectangle (\priorHigh,0.85);
\fill[gray!35] (\priorHigh,0) rectangle (1,0.85);

\draw[thin] (\priorLow,0) -- (\priorLow,0.85);
\draw[thin] (\priorHigh,0) -- (\priorHigh,0.85);

\node[font=\small,align=center] at ({\priorLow/2},0.425)
  {Sincere\\ rejection\\[0.5em] (SI)};

\node[font=\small,align=center] at (0.5,0.425)
  {Sincere voting\\ follows signals\\[0.5em] (SC)};

\node[font=\small,align=center] at ({(1+\priorHigh)/2},0.425)
  {Sincere\\ acceptance\\[0.5em] (SI)};

% Prior axis.
\draw[->] (0,-0.12) -- (1.025,-0.12);

\foreach \t in {0,\priorLow,0.5,\priorHigh,1}
  \draw (\t,-0.12) -- (\t,-0.22);

\node[below,font=\small] at (0,-0.24) {$0$};
\node[below,font=\small] at (\priorLow,-0.24)
  {$\frac12-x^{psr}$};
\node[below,font=\small] at (0.5,-0.24) {$\frac12$};
\node[below,font=\small] at (\priorHigh,-0.24)
  {$\frac12+x^{psr}$};
\node[below,font=\small] at (1,-0.24) {$1$};

\node[font=\small] at (0.5,-0.90) {Prior $\mu$};

\node[font=\footnotesize,align=center,text width=10.6cm]
  at (0.5,-1.45)
  {Pooling on either alternative is a voting-stage equilibrium
   in every region, but need not be sincere.};

\end{tikzpicture}

\caption{Voting incentives under CPSR for $n=1$, $B=1.2$,
and $a=1$, at the symmetric effort profile with
$x=x^{psr}=4/15$.
Sincere voting pools on rejection for $\mu\leq7/30$,
follows signals for $7/30<\mu\leq23/30$, and pools on
acceptance for $\mu>23/30$.
Sincere voting is the unique best response when the other
DReps vote sincerely and the posterior differs from $1/2$;
other voting equilibria may coexist.}
\label{fig:cpsr-prior-regions-n1}
\end{figure}

%% file: figures/example3_plot.tex
% Plot fragment: use \input{figures/example3_plot.tex}.
% Requires pgfplots. Data are embedded; no external data file is needed.
\begingroup
\definecolor{cpsrblue}{RGB}{24,83,122}
\definecolor{cpsrorange}{RGB}{187,87,31}
\begin{tikzpicture}
\begin{axis}[name=main,width=\linewidth,height=.76\linewidth,scale only axis=false,
font=\small,tick label style={font=\scriptsize},label style={font=\small},
xmode=log,log basis x=10,xmin=1,xmax=1000,
xtick={1,10,100,1000},xticklabels={1,10,100,1000},
xlabel={$n$ (committee size $2n+1$)},grid=major,
grid style={gray!18},axis line style={gray!65},
legend style={draw=none,fill=white,font=\scriptsize,cells={anchor=west}},
ylabel={Effort difference (\%)},ymin=-1.1,ymax=20.5,
ytick={0,5,10,15,20}]
\addplot[gray!65,densely dashed,domain=1:1000] {0};
\addplot[cpsrblue,thick] coordinates {
(1,19.034680311854157)
(2,7.742948505602865)
(3,4.016949690180849)
(4,2.2656417371554216)
(5,1.2942099963546783)
(6,0.7009804959326127)
(7,0.31528808062155633)
(8,0.053368113124374617)
(9,-0.13016068758591182)
(10,-0.26172851499460803)
(11,-0.35760565478965)
(12,-0.4282576831481588)
(13,-0.48066097807230035)
(14,-0.5196056095343571)
(15,-0.5484634246517461)
(16,-0.5696588830972793)
(17,-0.5849674288310847)
(18,-0.5957100623407929)
(19,-0.6028834282981399)
(20,-0.60724871809269)
(21,-0.6093936183230508)
(22,-0.6097762322255784)
(23,-0.6087567083452594)
(24,-0.6066203390754032)
(25,-0.6035946457973007)
(26,-0.5998621634973333)
(27,-0.5955701092738352)
(28,-0.5908377656314578)
(29,-0.5857621693212578)
(30,-0.5804225308552513)
(31,-0.5748836941845803)
(32,-0.5691988641733148)
(33,-0.5634117709526043)
(34,-0.5575583979133492)
(35,-0.5516683691393909)
(36,-0.5457660693002353)
(37,-0.5398715520527286)
(38,-0.5340012802797944)
(39,-0.5281687318858697)
(40,-0.522384897559669)
(41,-0.5166586912918492)
(42,-0.5109972901246262)
(43,-0.5054064162439476)
(44,-0.499890571905548)
(45,-0.4944532356324194)
(46,-0.48909702648832765)
(47,-0.4838238419597629)
(48,-0.478634973928449)
(49,-0.4735312064210295)
(50,-0.46851289814823227)
};
\addplot[cpsrblue,only marks,mark=*,mark size=1.8pt] coordinates {
(1,19.034680311854157)
(2,7.742948505602865)
(3,4.016949690180849)
(10,-0.26172851499460803)
(20,-0.60724871809269)
(50,-0.46851289814823227)
};
\addplot[cpsrblue,thick,densely dashed] coordinates {
(50,-0.46851289814823227)
(1000,-0.04327998454155546)
};
\addplot[cpsrblue,only marks,mark=*,mark size=1.8pt] coordinates {
(1000,-0.04327998454155546)
};
\node[anchor=north west,font=\scriptsize,align=left] at (rel axis cs:.07,.97) {$100(x^*/x^{psr}-1)$};
\end{axis}
\end{tikzpicture}
\endgroup

%% file: figures/example5_plot.tex
% Plot fragment: use \input{figures/example5_plot.tex}.
% Requires pgfplots. Data are embedded; no external data file is needed.
\begingroup
\definecolor{cpsrblue}{RGB}{24,83,122}
\definecolor{cpsrorange}{RGB}{187,87,31}
\begin{tikzpicture}
\begin{axis}[width=\linewidth,height=.76\linewidth,scale only axis=false,
font=\small,tick label style={font=\scriptsize},label style={font=\small},
xmode=log,log basis x=10,xmin=1,xmax=50,
xtick={1,2,3,10,20,50},xticklabels={1,2,3,10,20,50},
xlabel={$n$ (committee size $2n+1$)},grid=major,
grid style={gray!18},axis line style={gray!65},
legend style={draw=none,fill=white,font=\scriptsize,cells={anchor=west}},
ylabel={Threshold relative to $x^*$},
ymin=.45,ymax=1.56,ytick={.5,.75,1,1.25,1.5},
legend pos=north west]
\addplot[cpsrorange,thick,densely dashed] coordinates {
(1,0.520273916929437)
(2,0.6406749298679056)
(3,0.729672443625595)
(4,0.8006097132132859)
(5,0.8594558370077237)
(6,0.9095068746873612)
(7,0.9528554291194111)
(8,0.9909322600505701)
(9,1.0247627620226387)
(10,1.0551072194560056)
(11,1.0825444843453567)
(12,1.107524904451652)
(13,1.130405271332166)
(14,1.1514726873453165)
(15,1.1709613219633062)
(16,1.1890644538784003)
(17,1.2059433013254977)
(18,1.221733612507389)
(19,1.2365506617756479)
(20,1.250493090498969)
(21,1.263645897151126)
(22,1.2760827917468567)
(23,1.2878680691041295)
(24,1.2990581134890267)
(25,1.3097026177754125)
(26,1.3198455792652124)
(27,1.3295261191686236)
(28,1.3387791616389872)
(29,1.3476360000749281)
(30,1.3561247722377248)
(31,1.3642708611195236)
(32,1.372097234947592)
(33,1.3796247369919041)
(34,1.3868723337462274)
(35,1.3938573283892195)
(36,1.4005955451497298)
(37,1.40710148916755)
(38,1.4133884856211962)
(39,1.4194688012379815)
(40,1.4253537507786007)
(41,1.4310537906457477)
(42,1.436578601428479)
(43,1.441937160899179)
(44,1.447137808745281)
(45,1.4521883041264056)
(46,1.4570958769766438)
(47,1.4618672738508665)
(48,1.4665087989789543)
(49,1.4710263511198371)
(50,1.4754254567155132)
};
\addlegendentry{$x^P/x^*$}
\addplot[cpsrblue,thick] coordinates {
(1,0.520273916929437)
(2,0.6406749298679056)
(3,0.729672443625595)
(4,0.8006097132132859)
(5,0.8594558370077237)
(6,0.9095068746873612)
(7,0.9528554291194111)
(8,0.9909322600505701)
(9,1.0)
(10,1.0)
(11,1.0)
(12,1.0)
(13,1.0)
(14,1.0)
(15,1.0)
(16,1.0)
(17,1.0)
(18,1.0)
(19,1.0)
(20,1.0)
(21,1.0)
(22,1.0)
(23,1.0)
(24,1.0)
(25,1.0)
(26,1.0)
(27,1.0)
(28,1.0)
(29,1.0)
(30,1.0)
(31,1.0)
(32,1.0)
(33,1.0)
(34,1.0)
(35,1.0)
(36,1.0)
(37,1.0)
(38,1.0)
(39,1.0)
(40,1.0)
(41,1.0)
(42,1.0)
(43,1.0)
(44,1.0)
(45,1.0)
(46,1.0)
(47,1.0)
(48,1.0)
(49,1.0)
(50,1.0)
};
\addlegendentry{$\delta^E_n/x^*$}
\addplot[cpsrorange,only marks,mark=square*,mark size=1.5pt,forget plot] coordinates {
(1,0.520273916929437)
(2,0.6406749298679056)
(3,0.729672443625595)
(10,1.0551072194560056)
(20,1.250493090498969)
(50,1.4754254567155132)
};
\addplot[cpsrblue,only marks,mark=*,mark size=1.8pt,forget plot] coordinates {
(1,0.520273916929437)
(2,0.6406749298679056)
(3,0.729672443625595)
(10,1.0)
(20,1.0)
(50,1.0)
};
\end{axis}
\end{tikzpicture}
\endgroup

%% file: figures/example4_plot.tex
% Plot fragment: use \input{figures/example4_plot.tex}.
% Requires pgfplots. Data are embedded; no external data file is needed.
\begingroup
\definecolor{cpsrblue}{RGB}{24,83,122}
\definecolor{cpsrorange}{RGB}{187,87,31}
\begin{tikzpicture}
\begin{axis}[width=.97\linewidth,height=.76\linewidth,scale only axis=false,
font=\small,tick label style={font=\scriptsize},label style={font=\small},
xmode=log,log basis x=10,xmin=1,xmax=1000,
xtick={1,10,100,1000},xticklabels={1,10,100,1000},
xlabel={$n$ (committee size $2n+1$)},grid=major,
grid style={gray!18},axis line style={gray!65},
legend style={draw=none,fill=white,font=\scriptsize,cells={anchor=west}},
ylabel={Conditional fraction $\bar\delta_n/x^*$},
ymin=.72,ymax=1,ytick={.75,.80,.85,.90,.95,1}]
\addplot[cpsrblue,thick] coordinates {
(1,0.7450835312668292)
(2,0.7786607927288015)
(3,0.8020000136817967)
(4,0.8193468418860639)
(5,0.8329159644771376)
(6,0.8439043007444812)
(7,0.8530338581542342)
(8,0.8607719678257375)
(9,0.8674367500631534)
(10,0.8732533255458482)
(11,0.8783860597404339)
(12,0.8829581052915983)
(13,0.887063774597379)
(14,0.890776659893526)
(15,0.8941551267705898)
(16,0.8972461290201451)
(17,0.9000879186861752)
(18,0.9027120101877434)
(19,0.9051446293426103)
(20,0.9074077994969036)
(21,0.9095201673691387)
(22,0.9114976391610675)
(23,0.9133538763262485)
(24,0.9151006861336536)
(25,0.9167483324004307)
(26,0.9183057849635005)
(27,0.9197809216549158)
(28,0.9211806930950531)
(29,0.9225112581258488)
(30,0.9237780958612802)
(31,0.9249860989761518)
(32,0.9261396518385455)
(33,0.9272426962962997)
(34,0.928298787369371)
(35,0.9293111406188154)
(36,0.9302826726251014)
(37,0.9312160357277216)
(38,0.9321136479628507)
(39,0.9329777189626228)
(40,0.933810272445326)
(41,0.9346131658118397)
(42,0.9353881072794624)
(43,0.9361366709100056)
(44,0.9368603098294981)
(45,0.9375603678938931)
(46,0.9382380900083551)
(47,0.938894631283413)
(48,0.9395310651764397)
(49,0.940148390750172)
(50,0.9407475391604404)
};
\addplot[cpsrblue,only marks,mark=*,mark size=1.8pt] coordinates {
(1,0.7450835312668292)
(2,0.7786607927288015)
(3,0.8020000136817967)
(10,0.8732533255458482)
(20,0.9074077994969036)
(50,0.9407475391604404)
};
\addplot[cpsrblue,thick,densely dashed] coordinates {
(50,0.9407475391604404)
(1000,0.986980470783276)
};
\addplot[cpsrblue,only marks,mark=*,mark size=1.8pt] coordinates {
(1000,0.986980470783276)
};
\end{axis}
\end{tikzpicture}
\endgroup

%% file: sections/appendix-implementation.tex
\section{Towards an implementation on Cardano}
\label{app:implementation}

In Cardano's current governance process, submitted votes are publicly
visible, and DReps can revise their votes before an action is ratified
\cite{Decentrilized_Gov_CIP-1694}. This transparency supports accountability, but also
allows voting decisions to respond to previously observed ballots.
A DRep may follow an apparent majority rather than acquire information,
delay participation while observing the tally, or face pressure to
reverse a publicly recorded vote. An apparently settled outcome may
also discourage remaining DReps from participating. These possibilities
motivate combining CPSR with temporary ballot confidentiality.
The objective is to reward informed participation while allowing
DReps to form their decisions without observing the emerging yes/no split.

\paragraph{Confidential voting with public participation.}
A possible design separates ballot submission from ballot disclosure.
During a fixed voting period, each eligible DRep submits an authenticated
but concealed ballot. Participation and the associated voting weight
remain publicly observable, allowing the community to track turnout
and encourage participation without knowing individual choices.
Ballot contents are disclosed only after submissions and revisions
have closed. Votes and accompanying rationales can then be published
for delegators to assess their representatives' decisions.

This timing removes the opportunity to copy previously submitted
ballots or revise a vote in response to the revealed tally. It also
reduces opportunities for targeted pressure based on recorded choices
while voting remains open. Public statements and participation patterns
may still convey information, but the ledger no longer reveals the
direction of individual ballots during the decision period.
Participation incentives remain a separate consideration: CPSR makes
rewards conditional on membership in a qualifying winning coalition,
rather than paying every DRep who submits a ballot.

\paragraph{Ballot submission and disclosure.}
One implementation option is commit--reveal: DReps first submit hiding,
binding commitments and subsequently reveal their votes after the
commitment deadline. Such a design needs an explicit rule for missing
reveals, since a participant could otherwise withhold a ballot after
seeing others' disclosures.

Alternatively, DReps could submit threshold-encrypted ballots, together
with proofs that they encode valid voting choices. After voting closes,
a designated set of decryption trustees jointly releases the ballots
with verifiable decryption evidence. This avoids requiring each DRep
to return for a reveal phase. The trustee threshold and operational
arrangements must provide confidentiality before closure and reliable
decryption afterwards. Both approaches provide concrete routes towards
the model's assumption that votes are chosen without observing others'
ballots.

\paragraph{Reward calculation and settlement.}
Before voting begins, the round fixes the eligible committee,
voting-weight snapshot, reward budget, and deadlines. It also specifies
how missing ballots, abstentions, ties, and undistributed rewards are
handled. Once ballots are disclosed, the same weight snapshot determines
the weighted outcome and each eligible winner's share of the reward.
CPSR payments can therefore be calculated from voting weights and
winning-coalition membership without measuring effort directly.

A first implementation could use a voluntary voting round alongside
a governance proposal, with its own announced outcome and settlement
rules. Integration into native governance would additionally require
adapting the ballot process and defining the rewarded coalition for
each supported action type. Cardano's ratification process involves
action-specific thresholds and multiple governance bodies, so reward
eligibility should be tied explicitly to the relevant DRep vote.

\paragraph{Eligibility and identity splitting.}
The headcount condition requires an eligibility policy that distinguishes
representatives under separate control. Otherwise, a controller with
$60\%$ of the weight in a three-DRep committee could divide that weight
among three identities, creating a five-identity committee in which it
controls both a weight majority and a headcount majority.
A pilot could therefore begin with a fixed committee whose members
are checked for common control. A broader deployment would need an
admission procedure addressing identity splitting before registration;
freezing membership alone only prevents changes within a voting round.

\paragraph{Deliberation and off-chain coordination.}
Temporary ballot confidentiality remains compatible with public debate,
shared research, and organized voting blocs. Party-like groups could
help DReps pool expertise, communicate positions, and establish
accountability to delegators. Their members would still submit ballots
without seeing other members' recorded choices during voting.
The design therefore preserves opportunities for deliberation while
reducing coordination based on the live tally.

An implementation should distinguish cooperation that improves
information from coordination aimed solely at securing rewards.
A pilot could examine both, alongside turnout and voting timing,
to inform the subsequent design of participation and eligibility rules.